\definecolor{transitive}{HTML}{2E5FE1}
\definecolor{derived}{HTML}{EA9C32}
\definecolor{transitive2}{HTML}{3AAF2C}
\newcommand{\inp}{\mathit{inp}}
\newcommand{\out}{\mathit{out}}
\newcommand{\Next}{\LTLnext}
\newcommand{\Globally}{\LTLsquare}
\newcommand{\Eventually}{\LTLdiamond}
\newcommand{\Until}{\LTLuntil}
\newcommand{\Weakuntil}{\mathcal{W}}
\newcommand{\true}{\mathit{true}}
\newcommand{\false}{\mathit{false}}
\newcommand{\safe}{\varphi_{\mathit{safe}}}
\newcommand{\live}{\varphi_{\mathit{live}}}
\newcommand{\comp}{\mathit{comp}}
\DeclareMathOperator{\pc}{||}
\newcommand{\branchingPi}[2]{\prescript{}{\mathit{#1}}\pi_{#2}}
\newcommand{\branchingPiP}[2]{\prescript{}{\mathit{#1}}\pi'_{#2}}
\newcommand{\branchingPiPP}[2]{\prescript{}{\mathit{#1}}\pi''_{#2}}
\newcommand{\dgsem}[1]{\mathcal{D}^\mathit{sem}_{#1}}
\newcommand{\semanticEdges}[1]{E^\mathit{sem}_{#1}}
\newcommand{\syntacticEdges}[1]{E^\mathit{syn}_{#1}}
\newcommand{\semanticPresentEdges}[1]{E^\mathit{sem}_{#1, p}}
\newcommand{\syntacticPresentEdges}[1]{E^\mathit{syn}_{#1, p}}
\newcommand{\semanticFutureEdges}[1]{E^\mathit{sem}_{#1, f}}
\newcommand{\syntacticFutureEdges}[1]{E^\mathit{syn}_{#1, f}}
\newcommand{\semanticInputEdges}[1]{E^\mathit{sem}_{#1, i}}
\newcommand{\rankSynthesis}[1]{\mathit{rank_{syn}(#1)}}
\newcommand{\rankImplementation}[1]{\mathit{rank_{impl}(#1)}}
\newcommand{\synthesisOrder}{<_\mathit{syn}}
\newcommand{\realInp}[2]{\mu}
\newcommand{\realInpP}{\nu}
\newcommand{\depSet}[1]{D_{#1}}
\newcommand{\clusterTrace}[2]{\prescript{#2}{#1}\pi}
\newcommand{\clusterTraceP}[2]{\prescript{#2}{#1}\pi'}
\newcommand{\conjunctTracePP}[3]{\prescript{#2,#3}{#1}\pi''}
\begin{document}
\title{Dependency-based Compositional Synthesis (Full Version)\footnote[3]{This is an extended version of~\cite{FinalVersion}.}\thanks{This work was partially supported by the German Research Foundation~(DFG) as part of the Collaborative Research Center ``Foundations of Perspicuous Software Systems'' (TRR 248, 389792660), and by the European Research Council (ERC) Grant OSARES (No. 683300).}}
\titlerunning{Dependency-based Compositional Synthesis (Full Version)}
%
\author{Bernd Finkbeiner
\and Noemi Passing
}
\authorrunning{B. Finkbeiner and N. Passing}
%
\institute{%
  CISPA Helmholtz Center for Information Security, Saarbr\"ucken, Germany \\
  \email{\{finkbeiner, noemi.passing\}@cispa.saarland}
}

\maketitle              
\begin{abstract}
  Despite many recent advances, reactive synthesis is still not really
  a practical technique. The grand challenge is to scale from small
  transition systems, where synthesis performs well, to complex
  multi-component designs. Compositional methods, such as the
  construction of dominant strategies for individual components,
  reduce the complexity significantly, but are usually not applicable
  without extensively rewriting the specification. In this paper, we
  present a refinement of compositional synthesis that does not
  require such an intervention. Our algorithm
  decomposes the system into a sequence of components, such that every
  component has a strategy that is dominant, i.e., performs at least
  as good as any possible alternative, provided that the preceding components
  follow their (already synthesized) strategies. The decomposition of
  the system is based on a dependency analysis, for which we provide
  semantic and syntactic techniques. We establish the soundness and
  completeness of the approach and report on encouraging experimental results.
\end{abstract}

\section{Introduction}
Compositionality breaks the analysis of a complex system into several smaller tasks over individual components. It has long been recognized as the key technique that makes a ``significant difference''~\cite{DBLP:conf/compos/1997} for the scalability of verification algorithms. In synthesis, it has proven much harder to develop successful compositional techniques. In a nutshell, synthesis corresponds to finding a winning strategy for the system in a game against its environment. In compositional synthesis, the system player controls an individual component, the environment player all remaining components~\cite{Finkbeiner+Schewe/05/Semi}. In practice, however, a winning strategy rarely exists for an individual component, because the specification can usually only be satisfied if several components collaborate. 

\emph{Remorsefree dominance}~\cite{DammF11}, a weaker notion than winning, accounts for such situations.
Intuitively, a dominant strategy is allowed to violate the specification as long as no other strategy would have satisfied it in the same situation. In other words, if the violation is the fault of the environment, we do not blame the component.
Looking for strategies that are dominant, rather than winning, allows us to find strategies that do not necessarily satisfy the specification for all input sequences, but satisfy the specification for sequences that are \emph{realistic} in the sense that they might actually occur in a system that is built from components that all do their best to satisfy the specification.

For safety specifications, it was shown that dominance is a compositional notion: the composition of two dominant strategies is again dominant. Furthermore, if a winning strategy exists, then all dominant strategies are winning.  
This directly leads to a compositional synthesis approach that synthesizes individual dominant strategies~\cite{DammF14}.
In general, however, there is no guarantee that a dominant strategy exists. Often, a component $A$ depends on the well-behavior of another component $B$ in the sense that $A$ needs to anticipate some future action of $B$.
In such situations, there is no dominant strategy for $A$ alone since the decision which strategy is best for $A$ depends on the specific strategy for $B$.

In this paper, we address this problem with an \emph{incremental} synthesis approach. Like in standard compositional synthesis, we split the system into components. However, we do not try to find dominant strategies for each component individually. Rather, we proceed in an incremental fashion such that each component can already assume a particular strategy for the previously synthesized components. 
We call the order, in which the components are constructed, the \emph{synthesis order}.
Instead of requiring the existence of dominant strategies for all components, we only require the existence of a dominant strategy \emph{under the assumption} of the previous strategies.
Similar to standard compositional synthesis, this approach reduces the complexity of synthesis by decomposing the system; additionally, it overcomes the problem that dominant strategies generally do not exist for all components without relying on other strategies.

The key question now is how to find the synthesis order. We propose two methods that offer different trade-offs between precision and computational cost.
The first method is based on a semantic dependency analysis of the output variables of the system. We build equivalence classes of variables based on cyclic dependencies, which then form the components of the system. The synthesis order is defined on the dependencies between the components, resolving dependencies that prevent the existence of dominant strategies.
The second method is based on a syntactic analysis of the specification, which conservatively overapproximates the semantic dependencies.

We have implemented a prototype of the incremental synthesis algorithm and compare it to the state-of-the-art synthesis tool BoSy~\cite{FaymonvilleFT17} on scalable benchmarks. The results are very encouraging: our algorithm clearly outperforms classical synthesis for larger systems.

\textbf{Related Work.} 
Kupferman et al.\ introduce a safraless compositional synthesis algorithm transforming the synthesis problem into an emptiness check on Büchi tree automata~\cite{KupfermanPV06}.
Kugler and Sittal introduce two compositional algorithms for synthesis from Live Sequence Charts specifications~\cite{KuglerS09}. Yet, neither of them is sound and complete. While they briefly describe a sound and complete extension of their algorithms, they did not implement it.
Filiot et al.\ introduce a compositional synthesis algorithm for LTL specifications~\cite{FiliotJR10} based on the composition of safety games. Moreover, they introduce a non-complete heuristic for dropping conjuncts of the specification.
All of the above approaches search for winning strategies and thus fail if cooperation between the components is needed.

The notion of remorsefree dominance was first introduced in the setting of reactive synthesis by Damm and Finkbeiner~\cite{DammF11}. They introduce a compositional synthesis algorithm for safety properties based on dominant strategies~\cite{DammF14}.

In the setting of controller synthesis, Baier et al. present an algorithm that incrementally synthesizes so-called most general controllers and builds their parallel composition in order to synthesize the next one~\cite{BaierKK11}. In contrast to our approach, they do not decompose the system in separate components. Incremental synthesis is only used to handle cascades of objectives in an online fashion.


\section{Motivating Example}\label{sec:motivating_example}

In safety-critical systems such as self-driving cars, correctness of the implementation with respect to a given specification is crucial.
Hence, they are an obvious target for synthesis.
However, a self-driving car consists of several components that interact with each other, leading to enormous state spaces when synthesized together.
While a compositional approach may reduce the complexity, in most scenarios there are neither winning nor dominant strategies for the separate components.
Consider a specification for a gearing unit and an acceleration unit of a self-driving car.
The latter one is required to decelerate before curves and to not accelerate in curves. To prevent traffic jams, the car is required to accelerate eventually if no curve is ahead. In order to safe fuel, it should not always accelerate or decelerate.
This can be specified in LTL as follows:
\setlength{\abovedisplayskip}{6pt}
\setlength{\belowdisplayskip}{6pt}
\setlength{\abovedisplayshortskip}{6pt}
\setlength{\belowdisplayshortskip}{6pt}
\begin{align*}
	\varphi_\mathit{acc} = &\Globally (\mathit{curve\_ahead} \rightarrow \Next \mathit{dec}) \land \Globally (\mathit{in\_curve} \rightarrow \Next \neg \mathit{acc}) \land \Globally \Eventually keep \\
					& \land \Globally ((\neg \mathit{in\_curve} \land \neg \mathit{curve\_ahead}) \rightarrow \Eventually \mathit{acc}) \land \Globally \neg (\mathit{acc} \land \mathit{dec}) \\
					&\land \Globally \neg (\mathit{acc} \land \mathit{keep}) \land \Globally \neg (\mathit{dec} \land \mathit{keep}) \land \Globally (\mathit{acc} \lor \mathit{dec} \lor \mathit{keep}),
\end{align*}
where $\mathit{curve\_ahead}$ and $\mathit{in\_curve}$ are input variables denoting whether a curve is ahead or whether the car is in a curve, respectively. The output variables are $\mathit{acc}$ and $\mathit{dec}$, denoting acceleration and deceleration, and $\mathit{keep}$, denoting that the current speed is kept.
Note that $\varphi_{acc}$ is only realizable if we assume that a curve is not followed by another one with only one step in between infinitely often.

The gearing unit can choose between two gears. It is required to use the smaller gear when the car is accelerating and the higher gear if the car reaches a steady speed after accelerating. 
This can be specified in LTL as follows, where $\mathit{gear_i}$ are output variables denoting whether the first or the second gear is used:
\begin{align*}
	\varphi_\mathit{gear} = &\Globally ((\mathit{acc} \land \Next \mathit{acc}) \rightarrow \Next \Next \mathit{gear_1}) \land \Globally ((\mathit{acc} \land \Next \mathit{keep}) \rightarrow \Next \Next \mathit{gear_2}) \\
					&\land \Globally \neg (\mathit{gear_1} \land \mathit{gear_2}) \land \Globally (\mathit{gear_1} \lor \mathit{gear_2}).
\end{align*}

When synthesizing a strategy $s$ for the acceleration unit, it does not suffice to consider only $\varphi_\mathit{acc}$ since $s$ affects the gearing unit.
Yet, there is clearly no winning strategy for $\varphi_\mathit{car} := \varphi_\mathit{acc} \land \varphi_\mathit{gear}$ when considering the acceleration unit separately.
There is no dominant strategy either: As long as the car accelerates after a curve, the conjunct $\Globally ((\neg \mathit{in\_curve} \land \neg \mathit{curve\_ahead}) \rightarrow \Eventually \mathit{acc})$ is satisfied. If the gearing unit does not react correctly, $\varphi_\mathit{car}$ is violated. Yet, an alternative strategy for the acceleration unit that accelerates at a different point in time at which the gearing unit reacts correctly, satisfies $\varphi_\mathit{car}$.
Thus, neither a compositional approach using winning strategies, nor one using dominant strategies, is able to synthesize strategies for the components of the self-driving car.

However, the lack of a dominant strategy for the acceleration unit is only due to the uncertainty whether the gearing unit will comply with the acceleration strategy. The only dominant strategy for the gearing unit is to react correctly to the change of speed. Hence, providing this knowledge to the acceleration unit by synthesizing the strategy for the gearing unit beforehand and making it available, yields a dominant and even winning strategy for the acceleration unit.
Thus, synthesizing the components \emph{incrementally} instead of \emph{compositionally} allows for separate strategies even if there is a dependency between the components.


\section{Preliminaries}

\paragraph{LTL.} 
Linear-time temporal logic~(LTL) is a specification language for linear-time properties. 
Let $\Sigma$ be a finite set of atomic propositions and let $a \in \Sigma$. The syntax of LTL is given by
$ \varphi, \psi ::= a ~ | ~ \neg \varphi ~ | ~ \varphi \lor \psi ~ | ~ \varphi \land \psi ~ | ~ \Next \varphi ~ | ~ \varphi \Until \psi ~ | ~ \varphi \Weakuntil \psi$.
We define the abbreviations $\true := a \lor \neg a$, $\false := \neg \true$, $\Eventually \varphi = \true \Until \varphi$, and $\Globally \varphi = \neg \Eventually \neg \varphi$ as usual and use the standard semantics.
The language $\mathcal{L}(\varphi)$ of a formula $\varphi$ is the set of infinite words that satisfy $\varphi$.

\paragraph{Automata.}
Given a finite alphabet $\Sigma$, a universal co-Büchi automaton is a tuple $\mathcal{A} = (Q,q_0,\delta,F)$, where $Q$ is a finite set of states, $q_0 \in Q$ is the initial state, $\delta: Q \times 2^\Sigma \times Q$ is a transition relation, and $F \subseteq Q$ is a set of rejecting states.
Given an infinite word $\sigma = \sigma_0\sigma_1 \dots \in (2^\Sigma)^\omega$, a run of $\sigma$ on $\mathcal{A}$ is an infinite sequence $q_0 q_1 \dots \in Q^\omega$ of states where $(q_i,\sigma_i,q_{i+1}) \in \delta$ holds for all $i \geq 0$. 
A run is called accepting if it contains only finitely many rejecting states. $\mathcal{A}$ accepts a word $\sigma$ if all runs of $\sigma$ on $\mathcal{A}$ are accepting.
The language $\mathcal{L}(\mathcal{A})$ of $\mathcal{A}$ is the set of all accepted words.
An LTL specification $\varphi$ can be translated into an equivalent universal co-Büchi automaton $\mathcal{A}_\varphi$ with a single exponential blow up~\cite{KupfermanV05}.

\paragraph{Decomposition.}
A decomposition is a partitioning of the system into components. A component $p$ is defined by its input variables $\inp(p) \subseteq (\inp \cup \out)$ and output variables $\out(p) \subseteq \out$ with $\inp(p) \cap \out(p) = \emptyset$, where $\inp$ and $\out$ are the input and output variables of the system and $V = \inp \cup \out$. The output variables of components are pairwise disjoint and their union is equivalent to $\out$. The \emph{implementation order} defines the communication interface between the components. It assigns a rank $\rankImplementation{p}$ to every component $p$. If $\rankImplementation{p} < \rankImplementation{p'}$, then $p'$ sees the valuations of the variables in $\inp(p')\cap\out(p)$ one step in advance, i.e., it is able to directly react to them, modeling knowledge about these variables in the whole system. The implementation order is not necessarily total.

\paragraph{Strategies.}
A strategy is a function $s: (2^{\inp(p)})^* \rightarrow2^{\out(p)}$ that maps a history of inputs of a component $p$ to outputs. We model strategies as Moore machines $\mathcal{T} = (T,t_0,\tau,o)$ with a finite set of states $T$, an initial state $t_0$, a transition function $\tau: T \times 2^{\inp(p)} \rightarrow T$, and an output function $o: T \rightarrow 2^{\out(p)}$ that is is independent of the input. 
Given an input sequence $\gamma = \gamma_0 \gamma_1 \dotsc \in (2^{V \setminus out(p)})^\omega$, $\mathcal{T}$ produces a path $\pi = (t_0 , \gamma_0 \cup o(t_0,\gamma_0)) (t_1 , \gamma_1 \cup o(t_1,\gamma_1)) \dotsc \in (T \times 2^{V})^\omega$, where $\tau(t_j, \boldsymbol{i}_j) = t_{j+1}$.
The projection of a path to the variables is called trace. 
The trace produced by $\mathcal{T}$ on $\gamma$ is called the computation of strategy $s$ represented by $\mathcal{T}$ on $\gamma$, denoted $\comp(s,\gamma)$.
A strategy $s$ is \emph{winning} for $\varphi$ if $\comp(s,\gamma)\models\varphi$ for all $\gamma \in (2^\inp)^\omega$.
A strategy $s$ is \emph{dominated} by a strategy $t$ for $\varphi$ if for all $\gamma \in (2^{V \setminus \out(p)})^\omega$ with $\comp(s,\gamma) \models \varphi$, $\comp(t,\gamma) \models \varphi$ holds as well.
A strategy is \emph{dominant} if it dominates every other strategy.
A specification $\varphi$ is called \emph{admissible} if there exists a dominant strategy for $\varphi$.

\paragraph{Bounded Synthesis.}
Given a specification, synthesis derives an implementation that is correct by construction.
Bounded synthesis~\cite{FinkbeinerS13} additionally requires a bound $b \in \mathbb{N}$ on the size of the implementation as input.
It produces size-optimal strategies.
The search for a strategy satisfying the specification is encoded into a constraint system. 
If it is unsatisfiable, then the specification is unrealizable for the given size bound. 
Otherwise, the solution defines a winning strategy.
There exist SMT~\cite{FinkbeinerS13} as well as SAT, QBF, and DQBF~\cite{FaymonvilleFRT17} encodings.


\section{Synthesis of Dominant Strategies}

In our incremental synthesis approach, we seek for dominant strategies, rather than for winning ones.
To synthesize dominant strategies, we construct a universal co-Büchi automaton $\mathcal{A}^\mathit{dom}_\varphi$ for a specification $\varphi$ that accepts exactly the computations of dominant strategies following the ideas in~\cite{DammF11,DammF14}. As for the universal co-Büchi automaton $\mathcal{A}_\varphi$ with $\mathcal{L}(\mathcal{A}_\varphi) = \mathcal{L}(\varphi)$, the size of $\mathcal{A}^\mathit{dom}_\varphi$ is exponential in the length of $\varphi$~\cite{DammF14}.
For further details, we refer to \Cref{app:dominant_strategies}.
In bounded synthesis, the universal co-Büchi automaton $\mathcal{A}^\mathit{dom}_\varphi$ is then used instead of $\mathcal{A}_\varphi$ in order to derive dominant strategies. 

Since we synthesize independent components compositionally, dominance of the parallel composition of dominant strategies is crucial for both soundness and completeness.
Yet, in contrast to winning strategies, the parallel composition of dominant strategies is not guaranteed to be dominant in general.
Consider a system with components $p_1$ and $p_2$ that send each other messages $m_1$ and $m_2$, and the specification $\varphi = \Eventually m_1 \land \Eventually m_2$. For $p_1$, it is dominant to wait for $m_2$ before sending $m_1$ since this strategy only violates $\Eventually m_1$ if $\Eventually m_2$ is violated as well.
Analogously, it is dominant for $p_2$ to wait for $m_1$ before sending $m_2$. The parallel composition of these strategies, however, never sends any message. It violates $\varphi$ in every situation while there are strategies that are winning for $\varphi$.
Nevertheless, dominant strategies are compositional for safety specifications:
\begin{theorem}[\cite{DammF14}]\label{thm:compositionality_safety}
	Let $\varphi$ be a safety property and let $s_1$ and $s_2$ be strategies for components $p_1$ and $p_2$. If $s_1$ is dominant for $\varphi$ and $p_1$ and $s_2$ is dominant for $\varphi$ and $p_2$, then the parallel composition $s_1 \pc s_2$ is dominant for $\varphi$ and $p_1 \pc p_2$.
\end{theorem}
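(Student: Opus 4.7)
The natural approach is by contradiction: assume $s_1 \pc s_2$ is not dominant, so there exist an alternative strategy $t$ for $p_1 \pc p_2$ and an input sequence $\gamma \in (2^\inp)^\omega$ with $\comp(t, \gamma) \models \varphi$ and $\comp(s_1 \pc s_2, \gamma) \not\models \varphi$. Write $\alpha := \comp(s_1 \pc s_2, \gamma)$ and $\beta := \comp(t, \gamma)$; the safety of $\varphi$ yields a minimal index $n$ such that $\alpha[0..n]$ is a bad prefix, whereas every prefix of $\beta$ is good. The plan is to use the two individual dominance hypotheses to build a satisfying trace whose first positions coincide with $\alpha[0..n]$, contradicting the badness of that prefix.

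The first step is a projection-plus-dominance argument applied to $\beta$. The Moore strategy $t$ induces a strategy $t_1$ for $p_1$ for which $\comp(t_1,\gamma_1) = \beta$, where $\gamma_1 := \beta|_{V \setminus \out(p_1)}$. Since $\beta \models \varphi$, dominance of $s_1$ forces $\comp(s_1,\gamma_1) \models \varphi$; call this trace $\beta_1$. It agrees with $\beta$ on $V \setminus \out(p_1)$, and in particular on $\out(p_2)$. A symmetric step projects $\beta_1$ to $\gamma_2 := \beta_1|_{V \setminus \out(p_2)}$, extracts an induced strategy for $p_2$, and uses dominance of $s_2$ to obtain a second satisfying trace $\beta_2$ in which $s_1$ plays against $\gamma_1$ and $s_2$ against $\gamma_2$.

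The central obstacle is that $\beta_2$ is not in general the honest compositional computation $\alpha$: in $\beta_2$ each component reacts to a pre-computed input trace, while in $\alpha$ the two components react to each other online. This is where safety is essential. I plan a finite induction on $k \leq n$ that maintains the invariant that $\alpha[0..k]$ is a good prefix, by constructing at each step a pair of strategies $u_1^{(k)}, u_2^{(k)}$ whose parallel composition on input $\gamma$ coincides with $\alpha$ on the first $k$ positions and still satisfies $\varphi$. The Moore-machine property that outputs at time $k$ depend only on inputs at times $<k$ lets me extend such a pair from index $k{-}1$ to index $k$ by splicing the actual one-step outputs of $s_1$ and $s_2$ into the $u$-pair and then re-applying the projection argument to the spliced prefix; if this splicing ever produced a bad one-step extension, one of the single-component dominance assumptions would be contradicted via the induced component strategy on the corresponding input projection. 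Once $k$ reaches $n$, the resulting satisfying trace exhibits $\alpha[0..n]$ as a good prefix, contradicting its minimality as a bad prefix. The delicate part is precisely this splicing bookkeeping, and it is the step that really exploits safety: for a liveness conjunct such as the $\Eventually m_1 \land \Eventually m_2$ example discussed just before the theorem, the analogous finite realignment fails, which is exactly why the general compositionality statement cannot be recovered without restricting to safety.
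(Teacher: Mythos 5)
The paper never proves this theorem itself: it is imported verbatim from~\cite{DammF14}, and the only in-paper argument of a similar flavour is the appendix proof of \Cref{thm:compositionality_liveness_only_for_one_process}, which uses the same projection-plus-dominance trick for the liveness part but delegates the safety case to exactly this cited result. So your proposal has to be judged on its own merits, and on those merits it is essentially right. The contradiction setup, the finite bad prefix $\alpha[0..n]$, and the two applications of dominance via induced (replay) strategies on the projected inputs $\gamma_1 = \beta|_{V \setminus \out(p_1)}$ and $\gamma_2 = \beta_1|_{V \setminus \out(p_2)}$ are all correct; note that $\beta_2$ already agrees with $\alpha$ at position $0$ (Moore machines fix the initial outputs), so it serves as the base case of your induction. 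The induction step also goes through along the lines you indicate: if $\delta$ satisfies $\varphi$, has input part $\gamma$, and agrees with $\alpha$ on $\out(p_2)$ up to position $k-1$, then feeding $\gamma \cup \delta|_{\out(p_2)}$ to $s_1$ (with the replay of $\delta|_{\out(p_1)}$ as the dominated witness) yields a satisfying trace whose $\out(p_1)$-part agrees with $\alpha$ up to position $k$, because outputs at position $j$ depend only on inputs before $j$; the symmetric application of dominance of $s_2$ then realigns $\out(p_2)$ up to position $k$ as well. After $n+1$ rounds the resulting satisfying trace has $\alpha[0..n]$ as a prefix, contradicting its badness.

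The one place your sketch is loose is precisely the splicing step you flag as delicate: you cannot splice the step-$k$ outputs of both $s_1$ and $s_2$ simultaneously and appeal to ``one of'' the dominance assumptions in a single re-projection. Each position needs the two dominance applications in alternation (first $s_1$ against the current trace's $p_2$-outputs, then $s_2$ against the result), with replay strategies as the dominated alternatives, and the agreement at position $k$ comes for free from the Moore property rather than from an explicit splice. Also, your invariant need not carry a pair of strategies $u_1^{(k)}, u_2^{(k)}$: ``some trace with input part $\gamma$ that satisfies $\varphi$ and agrees with $\alpha$ up to position $k$'' suffices, since replay strategies always witness whatever alternative the dominance definition requires. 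With these adjustments the induction closes, so what is missing is bookkeeping, not an idea.
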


We extend this result to specifications where only a single component affects the liveness part.
Intuitively, then a violation of the liveness part can always be lead back to the single component affecting it, contradicting the assumption that its strategy is dominant. We refer to \Cref{app:dominant_strategies} for further details.
\begin{theorem}\label{thm:compositionality_liveness_only_for_one_process}
	Let $\varphi$ be a property where only output variables of component $p_1$ affect the liveness part of $\varphi$, and let $s_1$ and $s_2$ be two strategies for components $p_1$ and $p_2$, respectively. If $s_1$ is dominant for $\varphi$ and $p_1$ and $s_2$ is dominant for $\varphi$ and $p_2$, then the parallel composition $s_1 \pc s_2$ is dominant for $\varphi$ and $p_1 \pc p_2$.
\end{theorem}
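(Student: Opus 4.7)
The plan is to argue by contradiction. Suppose $s_1 \pc s_2$ is not dominant: there exist an alternative strategy $u = u_1 \pc u_2$ and input $\gamma \in (2^\inp)^\omega$ such that $\pi_u := \comp(u, \gamma) \models \varphi$ while $\pi_s := \comp(s_1 \pc s_2, \gamma) \not\models \varphi$. Write $\varphi = \varphi_s \land \varphi_l$; by hypothesis $\varphi_l$ refers only to variables in $\inp \cup \out(p_1)$. The goal is to derive $\pi_s \models \varphi$ by chaining dominance of $s_2$ and $s_1$ through an intermediate computation $\pi_{us} := \comp(u_1 \pc s_2, \gamma)$.

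Because the incremental setup fixes $p_1$ before $p_2$, $u_1$ does not reactively depend on $\out(p_2)$, so its outputs in $\pi_u$ and $\pi_{us}$ coincide; hence $p_2$ sees the same input sequence $\alpha := \pi_u|_{V \setminus \out(p_2)}$ in both runs, and dominance of $s_2$ applied to $\comp(u_2, \alpha) = \pi_u \models \varphi$ yields $\pi_{us} = \comp(s_2, \alpha) \models \varphi$. Next, let $\beta := \pi_{us}|_{V \setminus \out(p_1)}$; since $\comp(u_1, \beta) = \pi_{us} \models \varphi$, dominance of $s_1$ gives $\tau := \comp(s_1, \beta) \models \varphi$.

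Because $s_1$ ignores $\out(p_2)$, the traces $\tau$ and $\pi_s$ agree on $\inp \cup \out(p_1)$; by the liveness hypothesis this yields $\pi_s \models \varphi_l$. For the safety part, I would construct the open-loop strategy $v_2$ that, independently of its input history, replays $\tau|_{\out(p_2)}$. On input $\gamma_2 := \pi_s|_{V \setminus \out(p_2)}$, the computation $\comp(v_2, \gamma_2)$ reproduces $\tau$ on every variable and thus satisfies $\varphi_s$; a final application of dominance of $s_2$ at $\gamma_2$ then forces $\pi_s = \comp(s_2, \gamma_2) \models \varphi_s$, contradicting $\pi_s \not\models \varphi$.

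The main obstacle is justifying the one-directional communication used throughout, namely that $p_1$ does not reactively read $\out(p_2)$: this matches the incremental regime but can fail under genuinely cyclic interfaces, precisely the situation in the $\Eventually m_1 \land \Eventually m_2$ deadlock example discussed just after \Cref{thm:compositionality_safety}. The liveness hypothesis plays its essential role in the penultimate step, transferring $\varphi_l$-satisfaction from $\tau$ to $\pi_s$ along the $\inp \cup \out(p_1)$-projection; dropping it would leave the $\out(p_2)$-discrepancy between $\tau$ and $\pi_s$ visible to $\varphi_l$, and the projection argument would break.
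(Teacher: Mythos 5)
There is a genuine gap, and it is exactly the one you flag as ``the main obstacle'': your argument needs $u_1$ (and later even $s_1$) to ignore $\out(p_2)$, but this is not a hypothesis of \Cref{thm:compositionality_liveness_only_for_one_process}. In the paper's setting a strategy for $p_1$ is a function of histories over $V \setminus \out(p_1)$, which includes $\out(p_2)$, and the theorem is invoked precisely for components of the \emph{same} synthesis rank, i.e., components synthesized compositionally side by side -- neither is ``fixed before'' the other, so the appeal to the incremental regime is not available. Without one-directional information flow, your very first chaining step breaks: in $\comp(u_1 \pc s_2,\gamma)$ the outputs of $u_1$ may differ from those in $\comp(u,\gamma)$, so $s_2$ does not see the input $\alpha$ at which you apply its dominance; likewise the later claims that $\tau = \comp(s_1,\beta)$ and $\pi_s$ agree on $\inp \cup \out(p_1)$, and that the open-loop replay $v_2$ reproduces $\tau$ on \emph{every} variable, all silently use that $s_1$'s outputs are insensitive to the $\out(p_2)$ part of its history. (A smaller wrinkle: dominance of $s_2$ is with respect to $\varphi$, not $\safe$, so your last step needs $\comp(v_2,\gamma_2) \models \varphi$; if the replay claim held this would already give $\pi_s \models \varphi$ outright, making the separate liveness transfer redundant -- a sign the case split is not doing the work you intend.)

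The paper's proof avoids the interface assumption by a different decomposition of the contradiction. It splits on which part of $\varphi \equiv \safe \land \live$ the composition violates: a violation of $\safe$ is discharged by \Cref{thm:compositionality_safety}. For a violation of $\live$, it does not decompose the global alternative $t$ into independent component strategies at all; instead it extracts the output sequences $\gamma^{t_2}$ and $\gamma^{s_2}$ of $p_2$ under $t$ and under $s_1 \pc s_2$, defines a \emph{replay} strategy $t_1$ that reproduces $\comp(t,\gamma)\cap\out(p_1)$ on the single input $\gamma\cup\gamma^{t_2}$, applies dominance of $s_1$ only at that input to get $\comp(s_1,\gamma\cup\gamma^{t_2})\models\live$, and then uses the hypothesis that $\live$ depends only on $\inp \cup \out(p_1)$ to carry satisfaction of $\live$ over to $\comp(s_1 \pc s_2,\gamma)$, yielding the contradiction. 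If you want to repair your proof, you should restructure it along these lines (replay strategies built from the global alternative, plus the insensitivity of $\live$ to $\out(p_2)$), rather than positing an acyclic communication interface that the theorem does not grant.
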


To lift compositional synthesis to real-world settings where strategies have to rely on the fact that other components will not maliciously violate the specification, we circumvent the need for the existence dominant strategies for every component in the following sections: We model the assumption that other components behave in a dominant fashion by synthesizing strategies incrementally.


\section{Incremental Synthesis}

In this section, we introduce a synthesis algorithm based on dominant strategies, where, in contrast to compositional synthesis, the components are not necessarily synthesized independently but one after another. The strategies that are already synthesized provide further information to the one under consideration. 

For the self-driving car from \Cref{sec:motivating_example}, for instance, there is no dominant strategy for the acceleration unit.
However, when provided with a dominant gearing strategy, there is even a winning strategy for the acceleration unit.
Therefore, synthesizing strategies for the components incrementally, rather than compositionally, allows us to synthesize a strategy for the self-driving car.

\setlength{\textfloatsep}{10pt}
\begin{algorithm}[t]
	\SetKwInput{KwData}{Input}
	\SetKwInOut{KwResult}{Output}
	\SetKw{KwBy}{by}
	
	\KwData{specification $\varphi$, array $C$ of arrays of $k$ components ordered by $\synthesisOrder$}
	\KwResult{strategies $s_1, \dots, s_k$ such that $s_1 \pc \dots \pc s_k$ is dominant for $\varphi$}
	
	array[k] strategies
	
	strategy assumedStrategies
	
	\For{$i=1$ \KwTo $C.length()$ \KwBy $1$}{
		
		strategy addForLayer
		
		\For{$j=1$ \KwTo $C[i].length()$ \KwBy $1$}{
		
			synthesize strategy $s$ for $C[i][j]$ such that (assumedStrategies $\pc$ $s$) is dominant for $\varphi$
			
			int component = $C[i][j].getLabel()$
		
			strategies[component] = $s$
					
			addForLayer = addForLayer $\pc$ $s$
		}
		
		assumedStrategies = assumedStrategies $\pc$ addForLayer
	}
	\caption{Incremental Synthesis}\label{alg:incremental_synthesis}
\end{algorithm}

The incremental synthesis algorithm is described in \Cref{alg:incremental_synthesis}.
Besides a specification $\varphi$, it expects an array of arrays of components that are ordered by the \emph{synthesis order} $\synthesisOrder$ as input. The synthesis order assigns a rank $\rankSynthesis{p_i}$ to every component $p_i$. Strategies for components with lower ranks are synthesized before strategies for components with higher ranks. Strategies for components with the same rank are synthesized compositionally. Thus, to guarantee soundness, the synthesis order has to ensure that either $\varphi$ is a safety property, or that at most one of these components affects the liveness part of $\varphi$.

First, we synthesize dominant strategies $s_1, \dots, s_i$ for the components with the lowest rank in the synthesis order. Then, we synthesize dominant strategies $s_{i+1}, \dots, s_j$ for the components with the next rank \emph{under the assumption} of the parallel composition of $s_1, \dots, s_i$, denoted $s_1 \pc \dots \pc s_i$. Particularly, we seek for strategies such that $s_1 \pc \dots \pc s_i \pc s_{i+\ell}$ is dominant for $\varphi$ and $p_1 \pc \dots \pc p_i \pc p_{i+\ell}$, where $1 \leq \ell \leq j-i$.
We continue until strategies for all components have been synthesized.
The soundness follows directly from the construction of the algorithm as well as \Cref{thm:compositionality_safety,thm:compositionality_liveness_only_for_one_process}.
For further details, we refer to \Cref{app:incremental_synthesis}.

\begin{theorem}[Soundness]\label{thm:soundness}
	Let $\varphi$ be a specification and let $s_1, \dots, s_k$ be the strategies produced by the incremental synthesis algorithm. Then $s_1 \pc \dots \pc s_k$ is dominant for $\varphi$. If $\varphi$ is realizable, then $s_1 \pc \dots \pc s_k$ is winning.
\end{theorem}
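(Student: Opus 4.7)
The plan is to prove both parts by induction on the layers of the synthesis order. Let $r_1 < r_2 < \dots < r_m$ be the distinct ranks, and let $S_j$ denote the parallel composition of the strategies synthesized through the first $j$ layers. The induction hypothesis I would maintain is that $S_j$ is dominant for $\varphi$ and the parallel composition of all components processed so far.

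For the base case ($j=1$), the assumed strategy in \Cref{alg:incremental_synthesis} is empty, so the synthesis step produces, for each component $p$ of the lowest rank, an individually dominant strategy. By the hypothesis imposed on $\synthesisOrder$, either $\varphi$ is a safety specification, in which case \Cref{thm:compositionality_safety} applies, or at most one component at this rank affects the liveness part, in which case \Cref{thm:compositionality_liveness_only_for_one_process} applies; iterating the binary compositionality statement yields dominance of $S_1$. For the inductive step I would assume $S_{j-1}$ is dominant and consider the components $p_{i+1}, \dots, p_{i+\ell}$ of layer $j$. Each strategy $s_{i+k}$ is synthesized so that $S_{j-1} \pc s_{i+k}$ is dominant. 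Viewing $S_{j-1}$ as a fixed strategy for the super-component $p^\ast = p_1 \pc \dots \pc p_{n_{j-1}}$, this amounts to $s_{i+k}$ being individually dominant for $p_{i+k}$ in the game in which $p^\ast$ follows $S_{j-1}$. Applying \Cref{thm:compositionality_safety} or \Cref{thm:compositionality_liveness_only_for_one_process} within this restricted game (the rank hypothesis again guarantees the precondition) and then composing with $S_{j-1}$ gives dominance of $S_j$. After the final layer, $S_m = s_1 \pc \dots \pc s_k$ is dominant for $\varphi$.

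The second part follows directly from dominance: assuming $\varphi$ is realizable, there is a winning strategy $w$ for $p_1 \pc \dots \pc p_k$. Since $s_1 \pc \dots \pc s_k$ is dominant, it dominates $w$; but $\comp(w,\gamma) \models \varphi$ for every $\gamma \in (2^{\inp})^\omega$, so by the definition of dominance $\comp(s_1 \pc \dots \pc s_k,\gamma) \models \varphi$ for every $\gamma$, i.e.\ the composition is winning.

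The main obstacle is formalizing the move from ``$S_{j-1} \pc s_{i+k}$ is dominant'' to ``$s_{i+k}$ is a dominant strategy for $p_{i+k}$ in the game where $p^\ast$ already plays $S_{j-1}$,'' because \Cref{thm:compositionality_safety,thm:compositionality_liveness_only_for_one_process} are stated for components whose strategies are individually dominant. I expect this to require restricting the dominance quantifier to computations compatible with $S_{j-1}$ and checking that both compositionality theorems are stable under such a restriction, so that they can be chained across layers. The corresponding technical lemma is presumably what is deferred to \Cref{app:incremental_synthesis}; once it is in place, the induction and the winning-strategy argument above assemble into the theorem.
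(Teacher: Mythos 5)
Your skeleton matches the paper's: the realizability part is argued identically (a dominant strategy dominates the winning strategy guaranteed by realizability, hence is winning), and the dominance part rests on the algorithm's synthesis constraint plus \Cref{thm:compositionality_safety,thm:compositionality_liveness_only_for_one_process} for same-rank components. The difference lies in how the crux — combining several strategies synthesized in the \emph{same} layer under the \emph{same} assumed strategy — is discharged. You propose a notion of dominance relative to a fixed assumed strategy (``dominance in the restricted game where $p^\ast$ plays $S_{j-1}$'') and a stability lemma showing the two compositionality theorems survive this restriction; no such lemma appears in \Cref{app:incremental_synthesis}, and the paper does not need one. Instead it observes that \Cref{alg:incremental_synthesis} already yields \emph{unconditional} dominance statements about composite strategies: for each same-rank component $p_{i+\ell}$, the strategy $s_1 \pc \dots \pc s_i \pc s_{i+\ell}$ is dominant for $\varphi$ and the composite component $p_1 \pc \dots \pc p_i \pc p_{i+\ell}$, and the parallel composition of these composites is exactly $s_1 \pc \dots \pc s_k$ (consistent because they share the already fixed, deterministic strategies $s_1,\dots,s_i$). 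Applying \Cref{thm:compositionality_safety} or \Cref{thm:compositionality_liveness_only_for_one_process} to these composites sidesteps the relativized-dominance machinery you anticipated; if you keep your formulation, proving that relativization lemma is genuine extra work the paper never does. Also note the paper does not carry out your full induction over layers: it treats only the highest rank explicitly and takes dominance of $s_1 \pc \dots \pc s_i$ for the lower ranks ``by construction,'' so your explicit layer induction is a more careful (and arguably more honest) rendering of the same argument, provided the same-layer step is closed in the paper's composite-strategy style rather than via the unproven restricted-game lemma.
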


The success of incremental synthesis relies heavily on the choice of components. Clearly, it succeeds if compositional synthesis does. Otherwise, the synthesis order has to guarantee admissibility of every component when provided with the strategies of components with a lower rank.
In this regard, it is crucial that the parallel composition of the components with the same rank is dominant.
Thus, we introduce techniques for component selection inducing a synthesis order that ensure completeness of incremental synthesis in the following sections.


\section{Semantic Component Selection}\label{sec:sem_dep}

The component selection algorithm introduced in this section is based on dependencies between the output variables of the system. It directly induces a synthesis order ensuring completeness of incremental synthesis.

We require specifications to be of the form $(\varphi^A_1 \land \dots \land \varphi^A_n) \rightarrow (\varphi^G_1 \land \dots \land \varphi^G_m)$, where the conjuncts are conjunction-free in negation normal form.
When seeking for dominant strategies, assumptions can be treated as conjuncts as long as the system is not able to satisfy the specification by violating the assumptions.
Since it is a modeling flaw if the assumptions can be violated by the system, we assume specifications to be of the form $(\varphi^A_1 \land \dots \land \varphi^A_n) \land (\varphi^G_1 \land \dots \land \varphi^G_m)$ in the following.

First, we introduce an algorithm for component selection that ensures completeness of incremental synthesis in the absence of input variables. Afterwards, we extend it to achieve completeness in general.
The algorithm identifies equivalence classes of variables based on dependencies between them. These equivalence classes then build the components.
Intuitively, a variable $u$ depends on the current or future valuation of a variable $v$ if changing the valuation of $u$ yields a violation of the specification $\varphi$ that can be fixed by changing the valuation of $v$ at the same point in time or at a strictly later point in time, respectively.
The change of the valuation of $v$ needs to be necessary for the satisfaction of $\varphi$ in the sense that not changing it would not yield a satisfaction of $\varphi$.
\begin{definition}[Minimal Satisfying Changeset]\label{def:minimality}
	Let $\varphi$ be a specification, let $\gamma \in (2^\inp)^\omega$, $\pi \in (2^\out)^\omega$ be sequences such that $\gamma \cup \pi \not\models \varphi$, let $u \in \out$ and let $i$ be a position. 
	For sets $P \subseteq \out \setminus \{u\}$, $F \subseteq \out$, let $\Pi^{P,F}\!$ be the set of output sequences $\pi^{P,F}\! \in (2^\out)^\omega$ such that $\pi^{P,F}_j\! = \pi_j$ for all $j < i$ and
	\begin{itemize}
		\item $\forall v \in P.~ v \in \pi^{P,F}_i\! \leftrightarrow v \not\in \pi_i$ and $\forall v \in V \setminus P.~ v \in \pi^{P,F}_i\! \leftrightarrow v \in \pi_i$, and
		\item $\forall v \in F.~ \exists j>i.~ v \in \pi^{P,F}_j\! \leftrightarrow v \not\in \pi_j$ and  $\forall v \in V \setminus F.~ \forall j>i.~ v \in \pi^{P,F}_j\! \leftrightarrow v \in \pi_j$.
	\end{itemize}
	If there is a sequence $\pi^{P,F}\! \in \Pi^{P,F}\!$, such that $\gamma \cup \pi^{P,F}\! \models \varphi$ and for all $P' \subset P$, $F' \subset F$, we have $\gamma \cup \pi^{P',F'}\! \not \models \varphi$ for all $\pi^{P',F'}\! \in \Pi^{P',F'}\!$, then $(P,F)$ is called \emph{minimal satisfying changeset} with respect to $\varphi$, $\gamma$, $\pi$, $i$.
\end{definition}
\begin{definition}[Semantic Dependencies]\label{def:semantic_dependencies}
	Let $\varphi$ be a specification, let $u \in \out$. Let $\eta,\eta' \in (2^V)^*\!$ be sequences of length $i+1$ such that $u \in \eta'_i \leftrightarrow u \not\in \eta_i$, $\forall v \in V \setminus \{u\}. ~ v \in \eta'_i \leftrightarrow v \in \eta_i$, and $\forall j < i.~\eta'_j = \eta_j$.
	If there are $\gamma \in (2^\inp)^\omega\!$, $\branchingPi{\gamma}{} \in (2^\out)^\omega\!$ with $\gamma_0 \dots \gamma_i = \eta \cap \inp$, $\branchingPi{\gamma}{0} \dots \branchingPi{\gamma}{i} = \eta \cap \out$, and $\gamma \cup \branchingPi{\gamma}{} \models \varphi$, then
	\begin{itemize}
		\item $u$ depends on $(P,F)$ for $P \subseteq \out \setminus \{u\}$, $F \subseteq \out$ if there is $\branchingPiP{\gamma}{} \in (2^\out)^\omega\!$ with $\branchingPiP{\gamma}{0} \dots \branchingPiP{\gamma}{i} = \eta' \cap \out$ and $\branchingPi{\gamma}{j} = \branchingPiP{\gamma}{j}$ for all $j > i$ such that $\gamma \cup \branchingPiP{\gamma}{} \not\models \varphi$ and $(P,F)$ is a minimal satisfying changeset w.r.t.\ $\varphi$, $\gamma$, $\branchingPiP{\gamma}{}$, $i$. We say that $u$ \emph{depends semantically on the current or future valuation} of $v$, if there are $P$, $F$ such that $u$ depends on $(P,F)$ and $v \in P$ or $v \in F$, respectively.
		\item $u$ \emph{depends on the input}, if for all $\branchingPiPP{\gamma}{} \in (2^\out)^\omega$ with $\branchingPiPP{\gamma}{0} \dots \branchingPiPP{\gamma}{i} = \eta' \cap \out$, we have $\gamma \cup \branchingPiPP{\gamma}{} \not\models\varphi$, while there are $\gamma' \in (2^\inp)^\omega$, $\branchingPiPP{\gamma'}{} \in (2^\out)^\omega\!$ with $\gamma'_0 \dots \gamma'_i = \eta \cap \inp$ and $\branchingPiPP{\gamma'}{0} \dots \branchingPiPP{\gamma'}{i} = \eta' \cap \out$ such that $\gamma' \cup \branchingPiPP{\gamma'}{} \models \varphi$.
	\end{itemize}
\end{definition}

The specification of the self-driving car induces, for instance, a present dependency from $\mathit{acc}$ to $\mathit{dec}$: Let $\gamma = \emptyset^\omega$, $\eta = \{\mathit{gear_1},\mathit{dec}\}$, $\eta' = \{\mathit{gear_1},\mathit{dec},\mathit{acc}\}$. 
For $\branchingPi{\gamma}{} = \{\mathit{gear_1},\mathit{dec}\}\{\mathit{gear_2}\}^\omega\!$, $\gamma \cup \branchingPi{\gamma}{}$ clearly satisfies $\varphi_\mathit{car}$. In contrast, for $\branchingPiP{\gamma}{} = \{\mathit{gear_1},\mathit{dec},\mathit{acc}\}\{\mathit{gear_2}\}^\omega\!$, $\gamma \cup \branchingPiP{\gamma}{}\! \not\models \varphi_\mathit{car}$ since mutual exclusion of $\mathit{acc}$ and $\mathit{dec}$ is violated. For $P = \{dec\}$, $F = \emptyset$, $(P,F)$ is a minimal satisfying changeset w.r.t.\ $\varphi$, $\gamma$, $\branchingPiP{\gamma}{}$, $i$. Thus, $\mathit{acc}$ depends on the current valuation of $\mathit{dec}$.

If a variable $u$ depends on the future valuation of some variable $v$, a strategy for $u$ most likely has to predict the future, preventing the existence of a dominant strategy for $u$.
In our setting, strategies cannot react directly to an input.
Thus, present dependencies may prevent admissibility as well.
Yet, the implementation order resolves a present dependency from $u$ to $v$ if $\rankImplementation{v} < \rankImplementation{u}$: Then, the valuation of $v$ is known to $u$ one step in advance and thus a strategy for~$u$ does not have to predict the future.
Hence, if $u$ neither depends on the input, nor on the future valuation of some $v \in \out$, nor on its current valuation if $\rankImplementation{u} \leq \rankImplementation{v}$, then the specification is admissible for $u$.

To show this formally, we construct a dominant strategy for $u$. It maximizes the set of input sequences for which there is an output sequence that satisfies the specification.
In general, this strategy is not dominant since these output sequences may not be computable by a strategy.
Yet, this can only be the case if a strategy needs to predict the valuations of variables outside its control and this need is exactly what is captured by semantic present and future dependencies.
We refer the reader to \Cref{app:semantic_dependencies} for further details.
\begin{theorem}\label{thm:no_future_dep_implies_admissibility}
	Let $\varphi$ be a specification and let $O \subseteq \out$. If for all $u \in O$, $u$ neither depends semantically on the future valuation of $v$, nor on the current valuation of $v$ if $\rankImplementation{u} \leq \rankImplementation{v}$ for all $v \in V \setminus O$, nor on the input, then $\varphi$ is admissible for the component $p$ with $\out(p)=O$.
\end{theorem}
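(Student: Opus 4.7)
The plan is to construct an explicit dominant strategy $s$ for the component $p$ with $\out(p) = O$ via a \emph{maximal extendability} rule. Fix an arbitrary total order $\prec$ on $2^O$. At step $i$, given a history $h$ consisting of the inputs in $\inp(p)$ received so far together with the values of those $v \in V \setminus O$ that are visible to $p$ at step $i$ via the implementation order, call $\alpha \in 2^O$ \emph{extendable at $(h,i)$} if there exist $\gamma \in (2^\inp)^\omega$ and $\pi \in (2^\out)^\omega$ extending $h$ with $\pi_i \cap O = \alpha$ and $\gamma \cup \pi \models \varphi$. Let $s$ output the $\prec$-minimal extendable $\alpha$ at $(h,i)$, and an arbitrary value if none exists. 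Then I would verify two things: (i) the set of $\alpha$ extendable at $(h,i)$ depends only on $h$, so that $s$ is realizable as a Moore machine for $p$, and (ii) $s$ dominates every alternative strategy $t$ for $p$.

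For (i), suppose for contradiction that two completions of $h$ disagree on the extendability of some $\alpha \in 2^O$. Pick a witnessing satisfying trace on the extendable side, and transform it toward the non-extendable one by flipping, at position $i$, the values of the relevant $O$-coordinates. Applying \Cref{def:minimality} at position $i$ yields a minimal satisfying changeset $(P,F)$, which by \Cref{def:semantic_dependencies} exhibits a semantic dependency of some $u \in O$ on either the input, on the future valuation of some variable in $F$, or on the current valuation of some $v \in P \cap (V \setminus O)$ with $\rankImplementation{u} \leq \rankImplementation{v}$. Each of these is ruled out by hypothesis; flips that lie entirely inside $O$ itself are absorbed by the fact that all of $O$ is under $p$'s control at step $i$, and flips into the visible portion of $V \setminus O$ do not occur because $h$ already fixes those values.

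For (ii), let $t$ be any strategy for $p$ and $\gamma \in (2^\inp)^\omega$ with $\comp(t,\gamma) \models \varphi$. I would argue inductively that for every $i$, after processing $\gamma_0 \dots \gamma_{i-1}$ the partial trace of $\comp(s,\gamma)$ is extendable. The base case follows from $\comp(t,\gamma) \models \varphi$, which witnesses extendability of $t$'s initial output at step $0$, and by (i) the same history makes $s$'s initial choice extendable as well. For the inductive step, given a satisfying completion $\pi^*$ consistent with $s$'s committed prefix up to step $i$, extendability of $\pi^*_i \cap O$ together with (i) forces $s$'s choice $\alpha^s_i$ to be extendable, yielding a new satisfying completion agreeing with $s$ on outputs through step $i$. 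Combining these per-step witnesses via the standard König/compactness argument available for $\omega$-regular $\varphi$ (and formalised in \Cref{app:semantic_dependencies}) yields $\comp(s,\gamma) \models \varphi$, so $s$ dominates $t$.

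The main obstacle is (i): the semantic definition of dependency is phrased relative to a concrete satisfying trace together with a \emph{minimal} restoring changeset, so translating a failure of extendability into one of the three forbidden dependency patterns requires carefully picking the right base trace, producing the right local perturbation at position $i$, and then shrinking the restoring changeset so that its nonempty coordinates land precisely in the ``input,'' ``future,'' or ``present-but-higher-rank'' buckets that the hypothesis forbids. All the real content of the proof sits in this surgery inside \Cref{def:semantic_dependencies}; the construction of $s$ and the dominance argument are then essentially routine bookkeeping.
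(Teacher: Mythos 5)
There is a genuine gap, and it sits exactly where you wave at ``the standard K\"onig/compactness argument.'' From the invariant that every finite prefix of $\comp(s,\gamma)$ can be extended to some satisfying trace you may only conclude that $\comp(s,\gamma)$ lies in the \emph{safety closure} of $\mathcal{L}(\varphi)$, not in $\mathcal{L}(\varphi)$ itself; for non-safety $\varphi$ prefix-extendability at every step says nothing about the limit. Worse, the construction itself fails, not just its justification: take $\varphi = \Eventually a$ with $a \in O$ and no other variables. The hypotheses of \Cref{thm:no_future_dep_implies_admissibility} hold (there is nothing outside $O$ to depend on), so a dominant strategy exists (output $a$ immediately), but your rule ``output the $\prec$-minimal extendable $\alpha$'' will, for an order with $\emptyset \prec \{a\}$, output $\emptyset$ forever, since $\emptyset$ is extendable at every step. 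The resulting computation violates $\varphi$ while an alternative strategy satisfies it, so your $s$ is not dominant. The problem is that choosing an \emph{arbitrary} extendable output never discharges liveness obligations; the selection rule has to actively work toward satisfaction. Your step (i) is also slightly off target: as literally stated, extendability is a function of $h$ by definition (it existentially quantifies over completions), so there is nothing to prove there; the statement that actually needs the dependency hypotheses is that the existence of a satisfying \emph{output-sequence} extension coincides with the existence of a satisfying \emph{strategy} extension, i.e., that committing to an extendable output is safe against the yet-unseen current input and all future inputs.

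For comparison, the paper proves exactly that coincidence as a separate lemma (its sets $M_{\gamma,\sigma}$ and $L_{\gamma,\sigma}$), and this is where the ``no future / no unresolved present / no input dependency'' hypotheses are consumed, much as you intend in your surgery with minimal satisfying changesets. But its strategy is not ``any extendable output'': at each step it picks the output that \emph{maximizes} the set of input continuations for which satisfaction by some output extension remains possible, and dominance is then argued by comparing these sets along the run of $s$ against those along the run of a hypothetical better strategy $t$ --- no appeal to prefix compactness. To repair your proof you would have to replace the $\prec$-minimal rule by a choice rule of this kind (or otherwise build in progress toward liveness obligations, e.g., by following a fixed satisfying witness as long as it remains consistent) and redo (ii) without the compactness step.
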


We build a dependency graph in order to identify the components of the system.
The vertices represent the variables and edges denote semantic dependencies between them. Formally, the \emph{Semantic Dependency Graph} $\dgsem{\varphi} = (V_\varphi, \semanticEdges{\varphi})$ of $\varphi$ is given by $V_\varphi = V$ and $\semanticEdges{\varphi} = \semanticPresentEdges{\varphi} \cup \semanticFutureEdges{\varphi} \cup \semanticInputEdges{\varphi}\!$, where $(u,v) \in \semanticPresentEdges{\varphi}$ if $u$ depends on the current valuation of $v \in \out$, $(u,v) \in \semanticFutureEdges{\varphi}$ if $u$ depends on the future valuation of $v \in \out$, and $(u,v) \in \semanticInputEdges{\varphi}$ if $u$ depends on $v \in \inp$.
 
To identify the components, we proceed in three steps. 
First, we eliminate vertices representing input variables since they are not part of the components.
Second, we resolve present dependencies. Since future dependencies subsume present ones, we remove $(u,v)$ from $\semanticPresentEdges{\varphi}$ if $(u,v)\in\semanticFutureEdges{\varphi}$.
Then, we resolve present dependencies by refining the implementation order: If $(u,v)\in\semanticPresentEdges{\varphi}\!$, we add $\rankImplementation{v} < \rankImplementation{u}$ and remove $(u,v)$ from $\semanticPresentEdges{\varphi}$. This is only possible if the implementation order does not become contradictory. 
In particular, at most one present dependency between $u$ and $v$ can be resolved in this way.
Third, we identify the strongly connected components $\mathcal{C} := \{C_1, \dots, C_k\}$ of $\dgsem{\varphi}\!$.
They define the decomposition of the system: We obtain $k$ components $p_1, \dots, p_k$ with $\out(p_i) = C_i$ for $1 \leq i \leq k$.
Thus, the number of strongly connected components should be maximized when resolving present dependencies in step two.

The dependency graph induces the \emph{synthesis order}:
Let $\mathcal{C}^i \subseteq \mathcal{C}$ be the set of strongly connected components that do not have any direct predecessor when removing $\mathcal{C}^0 \cup \dots \cup \mathcal{C}^{i-1}$ from $\dgsem{\varphi}\!$.
For all $C_n \in \mathcal{C}^0$, $\rankSynthesis{p_n} = 1$. For $C_n \in \mathcal{C}^i\!$, $C_m \in \mathcal{C}^j\!$, $\rankSynthesis{p_n} < \rankSynthesis{p_{m}}$ if $i>j$ and $\rankSynthesis{p_n} > \rankSynthesis{p_{m}}$ if $i<j$. 
If $i=j$, $\rankSynthesis{p_n} = \rankSynthesis{p_{m}}$ if $\varphi$ is a safety property or only one of the components affects the liveness part of $\varphi$.
Otherwise, choose an ordering, i.e., either $\rankSynthesis{p_n} < \rankSynthesis{p_{m}}$ or $\rankSynthesis{p_{m}} < \rankSynthesis{p_n}$.

For the specification of the self-driving car, we obtain the semantic dependency graph shown in \Cref{fig:sem_dep_graph}. It induces three components $p_1$, $p_2$, $p_3$ with $\out(p_1) = \{\mathit{gear_1}\}$, $\out(p_2) = \{\mathit{gear_2}\}$, and $\out(p_3) = \{\mathit{acc},\mathit{dec},\mathit{keep}\}$. 
When adding $\rankImplementation{\mathit{gear_2}} < \rankImplementation{\mathit{gear_1}}$ to the implementation order, we obtain $\rankSynthesis{p_1} < \rankSynthesis{p_2} < \rankSynthesis{p_3}$ and thus $p_1 \synthesisOrder p_2 \synthesisOrder p_3$.

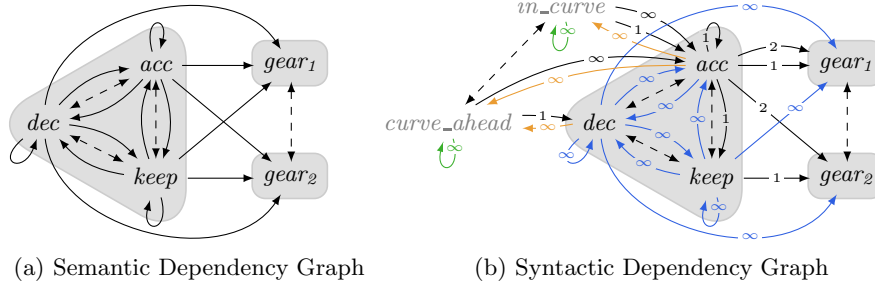
\begin{figure}[t]
	\begin{subfigure}{0.4\textwidth}
		\begin{tikzpicture}[>=latex,shorten >=0pt,auto,->,node distance=1cm,thin,every edge/.style={draw,font=\tiny},offset/.style={fill=white, anchor=center, pos=0.5, inner sep = 1pt},offsetBox/.style={fill=lightgray!50, anchor=center, pos=0.5, inner sep = 1pt},triangle/.style = {fill=lightgray!50, regular polygon, regular polygon sides=3},node rotated/.style = {rotate=90}]
			\path[use as bounding box] (-2,-2.3) rectangle (2.5, 1);
			\node[draw=lightgray!80,semithick,triangle, node rotated,minimum width=3.6cm,rounded corners = 20pt]	(p3)	at (-0.5,-0.78)	{};
			\node[draw=lightgray!80,semithick,fill=lightgray!50,minimum width=1cm,minimum height=0.6cm,rounded corners = 5pt]	(p2)	at (1.78,-1.47)		{};
			\node[draw=lightgray!80,semithick,fill=lightgray!50,minimum width=1cm,minimum height=0.6cm,rounded corners = 5pt]	(p1)	at (1.78,0.03)		{};
			
			\node[draw=none] (dummy) at (-2,0)		{};
			
			\node[draw=none] 	(acc)		at (0,0)				{$\mathit{acc}$};
			\node[draw=none]		(keep)		at (0,-1.5)			{$\mathit{keep}$};
			\node[draw=none]		(dec)		at (-1.5,-0.75)		{$\mathit{dec}$};
			\node[draw=none]		(g1)		at (1.8,0)			{$\mathit{gear_1}$};
			\node[draw=none]		(g2)		at (1.8,-1.5)		{$\mathit{gear_2}$};
				
			\draw[dashed,<->]	(acc) -- (keep);
			\draw[dashed,<->]	(keep) -- (dec);
			\draw[dashed,<->]	(acc) -- (dec);
			\draw[dashed,<->]	(g1) -- (g2);
			
			\path (acc)	edge[bend left=20]	node {}	(keep);
			\path (keep)	edge[bend left=20]	node	 {} (acc);
			\path (acc) edge[bend left=20] node {} (dec);
			\path (keep) edge[bend left=20] node {} (dec);
			\path (dec) edge[bend left=20] node {} (acc);
			\path (dec) edge[bend left=20] node {} (keep);
			\path (acc)	edge[loop above]	 node {} (acc);
			\path (keep) edge[loop below]	 node {} (keep);
			\path (dec) edge[in=250,out=220,loop,looseness=7] node {} (dec);
			\path (acc) edge	 node {} (g1);
			\path (acc) edge node {} (g2);
			\path (keep) edge node {} (g2);
			\path (keep) edge node {} (g1);
			\path (dec) edge[bend left=70] node {} (g1);
			\path (dec) edge[bend right=70] node {} (g2);
		\end{tikzpicture}
		\caption{Semantic Dependency Graph\label{fig:sem_dep_graph}}
	\end{subfigure}
	\begin{subfigure}{0.59\textwidth}
		\begin{tikzpicture}[>=latex,shorten >=0pt,auto,->,node distance=1cm,thin,every edge/.style={draw,font=\tiny},offset/.style={fill=white, anchor=center, pos=0.5, inner sep = 1pt},offsetBox/.style={fill=lightgray!50, anchor=center, pos=0.5, inner sep = 1pt},triangle/.style = {fill=lightgray!50, regular polygon, regular polygon sides=3},node rotated/.style = {rotate=90}]
			\path[use as bounding box] (-4.4,-2.3) rectangle (2.5, 1);
			\node[draw=lightgray!80,semithick,triangle, node rotated,minimum width=3.6cm,rounded corners = 20pt]	(p3)	at (-0.5,-0.78)	{};
			\node[draw=lightgray!80,semithick,fill=lightgray!50,minimum width=1cm,minimum height=0.6cm,rounded corners = 5pt]	(p2)	at (1.78,-1.47)		{};
			\node[draw=lightgray!80,semithick,fill=lightgray!50,minimum width=1cm,minimum height=0.6cm,rounded corners = 5pt]	(p1)	at (1.78,0.03)		{};
		
			\node[draw=none,gray]	(ic)		at (-2,0.8)			{$\mathit{in\_curve}$};
			\node[draw=none,gray]	(ca)		at (-3.5,-0.75)			{$\mathit{curve\_ahead}$};
			\node[draw=none] 	(acc)		at (0,0)				{$\mathit{acc}$};
			\node[draw=none]		(keep)		at (0,-1.5)			{$\mathit{keep}$};
			\node[draw=none]		(dec)		at (-1.5,-0.75)		{$\mathit{dec}$};
			\node[draw=none]		(g1)		at (1.8,0)				{$\mathit{gear_1}$};
			\node[draw=none]		(g2)		at (1.8,-1.5)			{$\mathit{gear_2}$};
			\draw[dashed,<->]	(acc) -- (keep);
			\draw[dashed,<->]	(keep) -- (dec);
			\draw[dashed,<->]	(acc) -- (dec);
			\draw[dashed,<->]	(g1) -- (g2);
			\draw[dashed,<->]	(ic) -- (ca);
			
			\path (acc) edge[bend left=20] 	node[offsetBox] {$1$} (keep);
			\path (keep)	 edge[bend left=20, transitive]	node[offsetBox]	 {$\infty$} (acc);
			\path (acc)	edge[loop above]	 node[offsetBox,pos=0.15] {$1$} (acc);
			\path (keep) edge[loop below, transitive]	 node[offsetBox,pos=0.1] {$\infty$} (keep);
			\path (dec) edge[in=250,out=220,loop,looseness=7,transitive] node[offset,pos=0.28] {$\infty$} (dec);
			\path (acc) edge	 node[offset] {$1$} (g1);
			\path (acc) edge[bend left=20] node[offset] {$2$} (g1);
			\path (acc) edge node[offset,pos=0.35] {$2$} (g2);
			\path (keep) edge node[offset] {$1$} (g2);
			\path (keep)	edge[transitive]	node[offset,pos=0.65] {$\infty$} (g1);
			\path (ic)	edge[bend left=5] node[offset,pos=0.3] {$1$} (acc);
			\path (ic)	edge[bend left=20] node[offset,pos=0.35] {$\infty$} (acc);
			\path (acc) edge[bend left=7, derived] node[offset,pos=0.7] {$\infty$} (ic);
			\path (ic)	edge[loop below, transitive2]	node[offset,pos=0.1] {$\infty$} (ic);
			\path (dec) edge[bend left=70,transitive] node[offset,pos=0.65] {$\infty$} (g1);
			\path (dec) edge[bend right=70,transitive] node[offset,pos=0.65] {$\infty$} (g2);
			\path (acc) edge[bend left=20,transitive] node[offsetBox] {$\infty$} (dec);
			\path (keep) edge[bend left=20,transitive] node[offsetBox] {$\infty$} (dec);
			\path (dec) edge[bend left=20,transitive] node[offsetBox] {$\infty$} (acc);
			\path (dec) edge[bend left=20,transitive] node[offsetBox] {$\infty$} (keep);
			\path (ca) edge[bend left=5]	 node[offset,pos=0.4] {$1$} (dec);
			\path (dec) edge[bend left=5,derived]	 node[offset] {$\infty$} (ca);
			\path (ca) edge[bend left=20] node[offset,pos=0.6] {$\infty$} (acc);
			\path (acc) edge[bend right=12,derived] node[offset,pos=0.65] {$\infty$} (ca);
			\path (ca)	edge[loop below,transitive2] node[offset,pos=0.1] {$\infty$} (ca);
		\end{tikzpicture}
		\caption{Syntactic Dependency Graph\label{fig:syn_dep_graph}}
	\end{subfigure}
\caption{Semantic and Syntactic Dependency Graphs for the self-driving car. Dashed edges denote present dependencies, solid ones future dependencies. Gray boxes denote induced components. In (b), blue edges are obtained by transitivity, orange ones by derivation, and green ones by transitivity after derivation. For the sake of readability, not all transitive and derived edges are displayed.}\label{fig:graphs}
\end{figure}

Incremental synthesis with the semantic component selection algorithm is complete for specifications that do not contain dependencies to input variables: By construction, a component $p \in \mathcal{C}^0$ has no unresolved semantic dependencies to variables outside of $p$. Thus, by \Cref{thm:no_future_dep_implies_admissibility}, $\varphi$ is admissible. Moreover, by the incremental synthesis algorithm as well as \Cref{thm:compositionality_safety,thm:compositionality_liveness_only_for_one_process}, for every component $p \in \mathcal{C}^i\!$, the parallel composition of the strategies of components~$p'$ with $\rankSynthesis{p'} < \rankSynthesis{p}$ is dominant.
Thus, by construction, there is a dominant strategy for $\mathcal{C}^0 \cup \dots \cup \mathcal{C}^i$ as well.
For further details, see \Cref{app:semantic_dependencies}.
\begin{lemma}\label{lem:completeness_no_dependencies_to_inputs}
	Let $\varphi$ be a specification. If for all $u \in \out$, $u$ does not depend semantically on the input, then incremental synthesis yields strategies for all components and the synthesis order induced by the component selection algorithm.
\end{lemma}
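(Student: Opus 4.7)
The plan is to proceed by induction on the layer index $i$ in the stratification $\mathcal{C}^0, \mathcal{C}^1, \dots$ of the strongly connected components of $\dgsem{\varphi}$ produced by the semantic component selection. The induction hypothesis I would carry will be that, after all components with synthesis rank up to the one corresponding to $\mathcal{C}^{i-1}$ have been processed, the parallel composition $s_1 \pc \dots \pc s_\ell$ of the synthesized strategies is dominant for $\varphi$ and $p_1 \pc \dots \pc p_\ell$, where $p_1, \dots, p_\ell$ enumerate all components in $\mathcal{C}^0 \cup \dots \cup \mathcal{C}^{i-1}$.

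For the base case $i = 0$, I would argue that every component $p \in \mathcal{C}^0$ has, in $\dgsem{\varphi}$, no outgoing semantic edge to a variable outside $\out(p)$: edges to input variables are ruled out by assumption, a future edge to another output would either merge $p$ into a larger SCC or give $p$ a predecessor in the stratification, and every surviving present edge between outputs in $\out(p)$ and outputs outside $\out(p)$ has been resolved by the refinement of the implementation order in step two of the component selection, leaving $\rankImplementation{v} < \rankImplementation{u}$ for such edges $(u,v)$. Theorem~\ref{thm:no_future_dep_implies_admissibility} then yields a dominant strategy for each $p \in \mathcal{C}^0$, and Theorem~\ref{thm:compositionality_safety} or Theorem~\ref{thm:compositionality_liveness_only_for_one_process} combines the strategies of components with equal synthesis rank, whose preconditions are guaranteed by the tie-breaking rule in the definition of $\synthesisOrder$.

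For the inductive step, I would view the parallel composition of the previously synthesized strategies as fixing the outputs of all components in $\mathcal{C}^0 \cup \dots \cup \mathcal{C}^{i-1}$ as a deterministic function of the inputs and the outputs of later components. From the point of view of a component $p \in \mathcal{C}^i$, this eliminates every semantic dependency from $\out(p)$ onto such a variable $v$: in \Cref{def:semantic_dependencies}, the sequences $\branchingPi{\gamma}{}$ and $\branchingPiP{\gamma}{}$ that witness a minimal satisfying changeset involving $v$ would have to deviate from the computation prescribed by the dominant strategy for the component owning $v$, but by dominance of that strategy no such deviation can rescue $\varphi$. What remains are semantic dependencies within $\mathcal{C}^i$ (present ones resolved by the implementation order, future ones absent by the SCC construction) and edges from $\mathcal{C}^i$ to strictly later layers, which by the construction of the stratification do not lead out of $p$. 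Theorem~\ref{thm:no_future_dep_implies_admissibility} then supplies a dominant strategy for $p$ under the assumption of the earlier strategies, and Theorems~\ref{thm:compositionality_safety} and~\ref{thm:compositionality_liveness_only_for_one_process} extend the induction hypothesis to layer $i$.

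The main obstacle, I expect, will be making the reduction in the inductive step rigorous, since Theorem~\ref{thm:no_future_dep_implies_admissibility} is phrased for a stand-alone specification, not modulo an assumed strategy. To handle this I would reformulate the step as an application of Theorem~\ref{thm:no_future_dep_implies_admissibility} to a derived specification $\varphi'$ obtained by conjoining $\varphi$ with the language of the already-fixed Moore machines, and then verify two things: that the semantic dependency graph of $\varphi'$ has no outgoing edge from $\mathcal{C}^i$ to the already-processed layers or to the inputs, and that dominance of $s \pc (s_1 \pc \dots \pc s_\ell)$ for $\varphi$ is equivalent to dominance of $s$ for $\varphi'$. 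The second equivalence is exactly where the dominance of $s_1 \pc \dots \pc s_\ell$ supplied by the induction hypothesis is needed, which is why Theorems~\ref{thm:compositionality_safety} and~\ref{thm:compositionality_liveness_only_for_one_process} enter the argument not only at the end of each layer but also as the mechanism that makes the layerwise reduction go through.
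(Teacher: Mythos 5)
Your outer structure (layerwise induction, admissibility of the $\mathcal{C}^0$ components via \Cref{thm:no_future_dep_implies_admissibility}, and maintaining dominance of the already-synthesized composition via \Cref{thm:compositionality_safety,thm:compositionality_liveness_only_for_one_process}) matches the paper. The gap is in the inductive step, precisely at the reduction you defer to: applying \Cref{thm:no_future_dep_implies_admissibility} to $\varphi' = \varphi \land \psi_{\mathit{Moore}}$ with $O = \out(p)$. Your verification task (a) --- that $\dgsem{\varphi'}$ has no outgoing semantic edge from $\out(p)$ to the already-processed layers or to the inputs --- is false in general. The semantic future dependencies from $\out(p)$ to the earlier components' outputs, i.e.\ exactly the dependencies that forced $p$ into a later layer, are witnessed by traces that can be chosen consistent with the fixed Moore machines (in the running example, flipping $\mathit{acc}$ at some position still breaks $\varphi_\mathit{gear}$ two steps later, and the synthesized gear strategy performs precisely the fixing change), so they survive the conjunction; worse, since the earlier machines read $\out(p)$, conjoining their languages can introduce \emph{new} future dependencies from $u \in \out(p)$ to their outputs, because flipping $u$ at position $i$ forces the machines' outputs after $i$ to change in order to keep $\psi_{\mathit{Moore}}$ satisfied. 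So the hypotheses of \Cref{thm:no_future_dep_implies_admissibility} do not hold for $\varphi'$; the intuition that fixing a dominant strategy ``eliminates'' these dependencies is not something a conjoined language constraint can express, since \Cref{def:semantic_dependencies} quantifies over arbitrary output sequences. Your equivalence (b) is likewise not immediate --- dominance of $s$ for $\varphi'$ and dominance of $s_1 \pc \dots \pc s_\ell \pc s$ for $\varphi$ quantify over different input alphabets and different classes of alternative strategies --- and it is exactly where the substantive work lies, not a routine verification.

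The paper's inductive step sidesteps both issues with an idea missing from your proposal: it applies \Cref{thm:no_future_dep_implies_admissibility} to the \emph{original} $\varphi$ but with the enlarged output set $O = \out(p_1) \cup \dots \cup \out(p_i) \cup \out(p)$; all unresolved semantic dependencies of this union point inside it (no input dependencies by assumption, and outgoing future/present edges only toward higher layers are excluded by the synthesis order), so $\varphi$ is admissible for the \emph{combined} component. It then takes a dominant strategy $t$ for this combined component, splits it into strategies $t_{1 \dots i}$ and $t_p$ that reproduce its outputs on the respective parts, and shows by an exchange argument --- using dominance of $t$ and of the already-synthesized $s_1 \pc \dots \pc s_i$, together with the Moore-machine identities $\comp(s_1 \pc \dots \pc s_i \pc t_p,\gamma) = \comp(s_1 \pc \dots \pc s_i,\gamma\cup\gamma^{s_p}) = \comp(t_p,\gamma\cup\gamma^{s_{1 \dots i}})$ --- that $s_1 \pc \dots \pc s_i \pc t_p$ is dominant, which is exactly the existence statement \Cref{alg:incremental_synthesis} needs at that step. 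To repair your argument you would need this enlarge-then-exchange mechanism (or an equivalent one) in place of the $\varphi'$ reduction; as proposed, the inductive step fails.
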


Since semantic dependencies to input variables cannot be resolved, admissibility is not guaranteed in general.
Yet, if the specification is realizable, admissibility of completely independent components follows: If $p$ does not depend on the input, admissibility of $\varphi$ follows directly with \Cref{lem:completeness_no_dependencies_to_inputs}.
Otherwise, $\varphi$ can only be non-admissible for $p$ if a strategy has to predict the valuation of an input variable. Since $p$ is completely independent of other components, a different valuation of an output variable outside of $p$ cannot affect the need to predict input variables. But then a strategy for the whole system has to predict inputs as well, yielding a contradiction. 
\begin{theorem}\label{thm:independend_component_admissible}
	Let $\varphi$ be a specification, let $p$ be a component such that for all $p'$, $\rankSynthesis{p'} \leq \rankSynthesis{p}$, and for all $u \in \out(p)$, $u$ neither depends semantically on the future valuation of $v \in \out \setminus \out(p)$, nor on its current valuation if $\rankImplementation{u} \leq \rankImplementation{v}$. If $\varphi$ is realizable, then $\varphi$ is admissible for $p$.
\end{theorem}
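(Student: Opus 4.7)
\begin{proofsketch}
The plan is a case split on whether any $u \in \out(p)$ depends semantically on an input variable. If none does, then together with the theorem's hypothesis that no $u \in \out(p)$ depends on any $v \in \out \setminus \out(p)$ in a way that is not already resolved by the implementation order, \Cref{thm:no_future_dep_implies_admissibility} applied with $O = \out(p)$ directly yields admissibility of $\varphi$ for $p$, and the claim follows.

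For the remaining case, I would suppose that some $u \in \out(p)$ has an input dependency and derive a contradiction with realizability of $\varphi$. Unfolding \Cref{def:semantic_dependencies}, such a dependency provides a position $i$, histories $\eta, \eta'$ differing only in the value of $u$ at $i$, an input $\gamma$ extending $\eta \cap \inp$ with a satisfying output completion, and an alternative input $\gamma'$ that agrees with $\gamma$ on positions $0,\dots,i$ for which $\eta'$ does admit a satisfying continuation, whereas on $\gamma$ itself no output extension of $\eta'$ satisfies $\varphi$. From this, any candidate winning strategy $W$ for the whole system is forced into a dilemma: after reading the common input prefix up to position $i$, $W$ must commit to a single value for $u$ at $i$, and neither choice can be winning against both continuations $\gamma$ and $\gamma'$, contradicting that $W$ is winning.

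The main obstacle I expect is ruling out that $W$ escapes this dilemma by exploiting outputs in $\out \setminus \out(p)$, which it controls but $p$ does not. This is where the remaining hypotheses become essential: any minimal satisfying changeset repairing the $u$-flip on input $\gamma$ would, by the assumption of the theorem, be forbidden from including future values of variables in $\out \setminus \out(p)$, and forbidden from including their present values whenever $\rankImplementation{u} \leq \rankImplementation{v}$; the surviving present dependencies are all resolved by the implementation order, so the corresponding $v$ is visible to $u$ one step in advance and can be treated like an input. Hence any conceivable repair on input $\gamma$ is pinned to $\out(p)$ itself, but the input-dependency witness states that no such repair exists. With this bridging step established, the two-input indistinguishability argument from the previous paragraph yields unrealizability of $\varphi$, contradicting the hypothesis and closing the case.
\end{proofsketch}
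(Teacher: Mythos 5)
Your first case is fine and is exactly how the paper begins. The gap is in your second case: you claim that an input dependency for some $u \in \out(p)$, together with the theorem's hypotheses, already contradicts realizability, via a ``dilemma'' in which no system-level winning strategy can handle both $\gamma$ and $\gamma'$. No such dilemma exists. The witness of \Cref{def:semantic_dependencies} only says that the history $\eta$ admits a satisfying continuation on $\gamma$, that the flipped history $\eta'$ admits none on $\gamma$, and that $\eta'$ admits one on some $\gamma'$; it says nothing about how the $\eta$-branch behaves on $\gamma'$, and nothing forces a winning strategy of the whole system to reproduce the output part of $\eta$ or $\eta'$ at all. Concretely, take $\inp = \{i\}$, $\out = \{a\}$ with the single component $p$ (so the rank and output-dependency hypotheses hold vacuously) and $\varphi = \neg a \lor \Next i$. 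Flipping $a$ from false to true at position $0$ witnesses that $a$ depends on the input (on $\gamma$ with $i$ false at position $1$ no extension of the flipped history satisfies $\varphi$, on $\gamma'$ with $i$ true at position $1$ one does), yet $\varphi$ is realizable and admissible: the strategy that always outputs $\neg a$ wins against both $\gamma$ and $\gamma'$. So the statement your Case 2 would establish --- realizability plus the hypotheses excludes input dependencies --- is simply false; an input dependency only blocks the sufficient criterion of \Cref{thm:no_future_dep_implies_admissibility}, it does not preclude admissibility. Your third paragraph also guards against the wrong escape route: the problem is not whether a repair may use variables of $\out \setminus \out(p)$, but that the system strategy is never pinned to the witness histories in the first place.

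The paper argues differently, and this is the step you would need: in the case where some $u \in \out(p)$ depends on the input, it assumes towards a contradiction that $\varphi$ is \emph{not admissible} for $p$. Since there are no unresolved dependencies from $\out(p)$ to other output variables, the construction behind \Cref{thm:no_future_dep_implies_admissibility} (and \Cref{lem:completeness_no_dependencies_to_inputs}) shows that non-admissibility can only come from a genuine need of a $p$-strategy to predict future input valuations. The hypothesis $\rankSynthesis{p'} \leq \rankSynthesis{p}$ for all $p'$ then guarantees that no component depends on $p$, so $p$ is completely independent of the remaining outputs, and a different valuation of variables outside $\out(p)$ cannot remove that prediction need; hence a strategy for the whole system would also have to predict inputs, contradicting realizability. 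In short, the contradiction must be derived from non-admissibility of $\varphi$ for $p$, lifted to the full system through the rank condition, not from the mere existence of an input-dependency witness.
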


Thus, when encountering a component for which $\varphi$ is not admissible in incremental synthesis, we can directly deduce non-realizability of $\varphi$ if there is no component with a higher rank in the synthesis order.
Yet, this does not hold in general.
Consider $\varphi = a \lor ((\Next b) \leftrightarrow (\Next \Next i))$, where $i$ is an input variable and both $a$ and $b$ are output variables.
Since $a$ depends on $b$ while $b$ does not depend on $a$, a strategy for $b$ has to be synthesized first. Yet, there is no dominant strategy for $b$ since it has to predict the future valuation of $i$, while there is a dominant strategy for the whole system, namely the one that sets $a$ in the first step.

Thus, we combine a component for which $\varphi$ is not admissible with a direct successor in the synthesis order until either $\varphi$ is admissible or only a single component is left.
For further details on the extended semantic component selection algorithm, we refer to~\Cref{app:semantic_dependencies}.
With this extension, the completeness of incremental synthesis follows directly from \Cref{lem:completeness_no_dependencies_to_inputs} and \Cref{thm:independend_component_admissible}.
\begin{theorem}[Completeness]\label{thm:completeness}
	Let $\varphi$ be a specification. If $\varphi$ is realizable, incremental synthesis yields strategies for all components and the synthesis order induced by the extended semantic component selection algorithm.
\end{theorem}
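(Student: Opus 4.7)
The plan is to argue by induction along the synthesis order produced by the extended component selection algorithm, showing that at each step the current component admits a dominant strategy under the assumption of the (already synthesized) predecessor strategies. Two separate cases appear at every step, and the extended algorithm is designed precisely so that one of them always applies: either the residual situation has no input dependencies at all, so \Cref{lem:completeness_no_dependencies_to_inputs} fires directly, or the current (possibly merged) component is independent of everything else still to be synthesized, so \Cref{thm:independend_component_admissible} supplies admissibility from realizability.

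More concretely, I would first fix the output of the extended algorithm, i.e., the layers $\mathcal{C}^0, \mathcal{C}^1, \dots$ with their synthesis ranks, and recall that the extension merges a component $p$ with a direct successor whenever $\varphi$ is not admissible for $p$, stopping only once admissibility holds or a single component remains. I would then prove, by induction on $i$, that after processing layers $\mathcal{C}^0, \dots, \mathcal{C}^{i-1}$, the algorithm has produced strategies whose parallel composition is dominant for $\varphi$ restricted to the variables of $\mathcal{C}^0 \cup \dots \cup \mathcal{C}^{i-1}$. The base case $i=0$ is trivial. For the inductive step, I would consider a component $p \in \mathcal{C}^i$. By the construction of $\synthesisOrder$ from $\dgsem{\varphi}$, $p$ has no unresolved semantic dependency to any variable in $\mathcal{C}^j$ with $j > i$; together with the merging step, it also has no unresolved dependency to any variable inside $\mathcal{C}^i$ other than those in $p$ itself. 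Hence $p$ satisfies the hypothesis of \Cref{thm:independend_component_admissible}, or, in the sub-case that $\varphi$ contains no input dependencies, the hypothesis of \Cref{lem:completeness_no_dependencies_to_inputs}. In both sub-cases, admissibility of $\varphi$ for $p$ under the assumption of the previously synthesized strategies follows, so the incremental synthesis step for $p$ succeeds; \Cref{thm:compositionality_safety,thm:compositionality_liveness_only_for_one_process} then extend dominance to the updated parallel composition.

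The main obstacle is justifying the "under the assumption of the previously synthesized strategies" part for \Cref{thm:independend_component_admissible}, which is stated about the plain specification $\varphi$ and not about a residual specification. I plan to handle this by observing that fixing dominant strategies for earlier layers does not introduce any new semantic dependency for $p$: the strategies only restrict the set of admissible output sequences for the earlier components, and since no variable in $\out(p)$ depends semantically on those variables (future or, modulo the implementation order, present), the minimal satisfying changesets relevant for $p$ remain unchanged. Hence the hypothesis of \Cref{thm:independend_component_admissible} still holds relative to the restricted behavior, and its conclusion—admissibility—carries over. Chaining this through all layers and invoking \Cref{thm:soundness} to upgrade the final dominant composition to a winning one completes the proof.
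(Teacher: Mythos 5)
Your proposal has the right overall shape (induction along the synthesis order, discharging each step via \Cref{lem:completeness_no_dependencies_to_inputs} and \Cref{thm:independend_component_admissible}, which is exactly how the paper justifies the theorem), but the central step misapplies \Cref{thm:independend_component_admissible}. That theorem requires both that $p$ is maximal in the synthesis order ($\rankSynthesis{p'} \leq \rankSynthesis{p}$ for \emph{all} $p'$) and that $p$ has no unresolved semantic dependencies to \emph{any} output outside $p$ --- in particular none to outputs of components with \emph{smaller} rank. For a component $p \in \mathcal{C}^i$ with $i \geq 1$ the opposite is typical: $p$ sits in layer $i$ precisely because it depends on outputs of earlier layers (the acceleration unit depends on the gear outputs), and there may be later layers as well, so neither hypothesis holds. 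Your patch --- that fixing the earlier dominant strategies introduces no new semantic dependencies --- does not close this gap, because the obstacle is not new dependencies but the dependencies to earlier outputs that are already there and that \Cref{thm:independend_component_admissible} forbids. Admissibility of such a $p$ \emph{under the assumption} of the earlier strategies is instead what the second half of the proof of \Cref{lem:completeness_no_dependencies_to_inputs} establishes, namely via \Cref{thm:no_future_dep_implies_admissibility} applied to the combined component $p_1 \pc \dots \pc p_i \pc p$ followed by a decomposition-of-dominance argument, and that route needs the absence of dependencies on the input.

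You also mischaracterize what the merging in the extended algorithm accomplishes. Components in the same layer $\mathcal{C}^i$ are already mutually independent by construction of the layers, so no merging is needed to eliminate intra-layer dependencies; the algorithm merges a non-admissible component with its \emph{direct successors} (higher-rank components, i.e., components that depend on it) in order to absorb exactly those situations where admissibility fails because of unresolvable dependencies on \emph{inputs}, as in the example $a \lor ((\Next b) \leftrightarrow \Next\Next i)$. The intended argument is: if no output depends semantically on the input, \Cref{lem:completeness_no_dependencies_to_inputs} already yields success at every step; otherwise, whenever a component is non-admissible under the earlier strategies, merging with successors continues, in the worst case until a single component remains, and only for a component that is both maximal in the order and independent of all remaining outputs (in the extreme case the single component comprising the whole system) does realizability yield admissibility, via \Cref{thm:independend_component_admissible} or trivially because a winning strategy is dominant. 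Your induction never treats the case in which a mid-order component with input dependencies is non-admissible and the merging has to propagate --- which is the only case distinguishing the extended algorithm from the plain one --- so the decisive step of the argument remains unjustified.
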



\section{Syntactic Analysis}

While analyzing semantic dependencies for component selection ensures completeness of incremental synthesis, computing the dependencies is hard. In particular, the semantic definition of dependencies is a hyperproperty~\cite{ClarksonS10}, i.e., a property relating multiple execution traces, with quantifier alternation. 
To determine the present and future dependencies between variables more efficiently, we introduce a dependency definition based on the syntax of the LTL formula.

\begin{definition}[Syntactic Dependencies]\label{def:syntactic_dependencies}
	Let $\varphi$ be an LTL formula in negation normal form. Let $\mathcal{T}(\varphi)$ be the syntax tree of $\varphi$, where $\Globally\Eventually$ is considered to be a separate operator. Let $q$ be a node of $\mathcal{T}(\varphi)$ with child $q'$, if $q$ is a unary operator, and left child $q'$ and right child $q''\!$, if $q$ is a binary operator.
	 We assign a set $\depSet{q} \in 2^{2^{V \times \mathbb{N} \times \mathbb{B}}}\!$ to each node $q$ of $\mathcal{T}(\varphi)$ as follows:
	 \vspace{-0.1cm}
	\begin{itemize}
		\item if $q$ is a leaf, then $q = u \in V$ and $\depSet{q} = \{\{(u, 0, \false)\}\}$,
		\item if $q = \neg$, then $\depSet{q} = \depSet{q'}$,
		\item if $q = \land$, then $\depSet{q} = \depSet{q'} \cup \depSet{q''}$,
		\item if $q = \lor$, then $\depSet{q} = \bigcup_{M \in \depSet{q'}\!} \bigcup_{M' \in \depSet{q''}\!} \{ M \cup M'\}$,
		\item if $q = \Next$, then $\depSet{q} = \bigcup_{M \in \depSet{q'}\!} \{ \{ (u, x+1, y) \mid (u,x,y) \in M\}\}$,
		\item if $q = \Globally$, then $\depSet{q} = \depSet{q'}\! \cup \bigcup_{M \in \depSet{q'}\!} \{ \{ (u,x,\mathit{true}) \} \mid (u,x,y) \in M \}$,
		\item if $q = \Eventually$, then $\depSet{q} = \depSet{q'} \cup \left\{ \bigcup_{M \in \depSet{q'}\!} \{ (u,x,\mathit{true}),\!(u,x,\mathit{false}) \mid (u,x,y) \in M \}\hspace{-0.05cm} \right\}$		
		\item if $q = \Globally \Eventually$, then $\depSet{q} = \bigcup_{M \in \depSet{q'}\!} \{ \{ (u,x,\true) \} \mid (u,x,y) \in M \}$,
		\item if $q = \Until$ or $q = \Weakuntil$, then 
		\vspace{-0.15cm}
		\begin{align*}
			\depSet{q} = &\bigcup_{M \in \depSet{q'}\!} \bigcup_{M' \in \depSet{q''}\!} \{ M \cup M'\} \\
			&\cup \bigcup_{M \in \depSet{q'}\!} \bigcup_{M' \in \depSet{q''}\!} \bigcup_{(u,x,y) \in M} \{\{ (u,x,\mathit{true}) \} \cup M' \} \\
			&\cup \left\{ \bigcup_{M' \in \depSet{q''}\!} \{ (u,x,\true),\!(u,x,\false) \mid (u,x,y) \in M' \} \right\}
		\end{align*}
	\end{itemize}
	\vspace{-0.1cm}
	Let q be the root node of $\mathcal{T}(\varphi)$ and let $(u,x,y),(v,x',y') \in M$ for some $M \in \depSet{q}$, $u,v \in V$, $x,x' \in \mathbb{N}$, and $y,y' \in \mathbb{B}$ with $(u,x,y) \neq (v,x',y')$. Then $u$ \emph{depends syntactically on the current valuation of} $v$, if $u \neq v$ and either $y=y'=\false$ and $x=x'$, or $y = \true$ and $y' = \false$ and $x \leq x'$, or $y=\false$ and $y'=\true$ and $x \geq x'$, or $y = y' = \true$. Furthermore, $u$ \emph{depends syntactically on the future valuation of} $v$, if either $y'=\true$, or $y'=\false$ and $x < x'$. The \emph{offset} of the future dependency is $\infty$ in the former case and $x'-x$ in the latter case.
\end{definition}

For $(u,x,y)$, $x$ denotes the number of $\Next$-operators under which $u$ occurs and $y$ denotes whether $u$ occurs under an unbounded temporal operator. 
Since the specification is in negation normal form, negation only occurs in front of variables and thus does not influence the dependencies.
Disjunction introduces dependencies between the disjuncts $\psi$ and $\psi'$ since the satisfaction of $\psi$ affects the need of satisfaction of $\psi'$ and vice versa.
A conjunct, however, has to be satisfied irrespective of other conjuncts and thus conjunction does not introduce dependencies.
Analogously, $\Eventually \psi$ introduces future dependencies between the variables in $\psi$, while $\Globally \psi$ does not. Adding triples with both $\true$ and $\false$ is necessary for the $\Eventually$-operator in order to obtain future dependencies from a variable to itself also if $\psi$ contains only a single variable, e.g., for $\Eventually u$.
For $\psi \Until \psi'$ and $\psi \Weakuntil \psi'$, there are dependencies between $\psi$ and $\psi'$ as well as future dependencies between the variables in $\psi'$ analogously to disjunction and the $\Eventually$-operator. Furthermore, there are future dependencies from $\psi'$ to $\psi$ since whether or not $\psi$ is satisfied in the future affects the need of satisfaction of $\psi'$ in the current step.
The $\Globally \Eventually$-operator takes a special position. Although including $\Eventually$, changing the valuation of a variable at a single position does not yield a violation of $\Globally\Eventually\psi$ and thus there is no semantic dependency. Hence, $\Globally \Eventually \psi$ does not introduce syntactic dependencies between the variables in $\psi$ either.

For the specification of the self-driving car from \Cref{sec:motivating_example}, we annotate, for instance, node $q$ representing the $\Globally$-operator of the conjunct $\Globally \neg (\mathit{acc} \land \mathit{dec})$ with $D_q = \{\{(\mathit{acc},0,\false),(\mathit{dec},0,\false)\}, \{(\mathit{acc},0,\true)\}, \{(\mathit{dec},0,\true)\}\}$, yielding a syntactic present dependency from $\mathit{acc}$ to $\mathit{dec}$ and vice versa. For the node $q$ representing the $\Globally$-operator of $\Globally((\mathit{acc} \land \Next \mathit{acc}) \rightarrow \Next\Next \mathit{gear_1})$, we obtain amongst others $\{ (\mathit{acc},0,\false),(\mathit{acc},1,\false),(\mathit{gear_1},2,\false) \} \in D_q$, yielding future dependencies from $\mathit{acc}$ to $\mathit{acc}$ with offset $1$ and to $\mathit{gear_1}$ with offsets $1$ and $2$.

As long as semantic dependencies do not range over several conjuncts, every semantic dependency is captured by a syntactic one as well: If there is a semantic dependency from $u$ to $v$ and if $\varphi$ does not contain any conjunction, $u$ and $v$ occur in the same set $M \in D_q$, where $q$ is the root node of $\mathcal{T}(\varphi)$, by construction. With structural induction on $\varphi$, it thus follows that every semantic dependency has a syntactic counterpart.
For further details, we refer the reader to \Cref{app:syntactic_dependencies}.
\begin{lemma}\label{lem:sem_implies_syn_single_conjunct}
	Let $\varphi$ be an LTL formula in negation normal form that does not contain any conjunction. Let $u,v \in V$ be variables. If $u$ depends semantically on the current or future valuation of $v$, then $u$ depends syntactically on the current or future valuation of $v$, respectively, as well.
\end{lemma}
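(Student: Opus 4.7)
\begin{proofsketch}
The plan is to prove the claim by structural induction on $\varphi$. Since $\varphi$ is in negation normal form without conjunction, the cases are literals ($u$ or $\neg u$), $\lor$, $\Next$, $\Globally$, $\Eventually$, $\Globally\Eventually$, $\Until$, and $\Weakuntil$. For each compound formula with root operator $q$, I would show that whenever $u$ depends semantically on the current (resp.\ future) valuation of $v$, there is a set $M \in \depSet{q}$ containing triples $(u,x,y)$ and $(v,x',y')$ whose annotations match one of the patterns of \Cref{def:syntactic_dependencies} for the current (resp.\ future) case. The base case is vacuous since a literal mentions only one variable. The absence of conjunction is essential: the rule $\depSet{q} = \depSet{q'} \cup \depSet{q''}$ for $\land$ is a disjoint union, so variables in different conjuncts would never coexist in a common member of $\depSet{q}$, and no amount of careful annotation tracking would recover the missing syntactic link.

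For the routine cases, the matchings are direct. For $\Next$, a semantic dependency at position $i+1$ corresponds to one at position $i$ in the subformula, matching the rule that increments every $x$. For $\Globally \psi$ and the second union in the $\Eventually \psi$ rule, the point is that a semantic dependency inside the subformula may be witnessed at an unrelated position: promoting each triple to the boolean flag $\true$ (on both polarities for $\Eventually$) makes the annotations compatible with the $y=y'=\true$ or $\true/\false$ branches of \Cref{def:syntactic_dependencies}, which do not constrain $x$ versus $x'$. For $\Globally\Eventually \psi$, flipping a single position cannot change satisfaction, so no semantic dependency arises and there is nothing to show. The $\lor$ case is the first where cross-variable dependencies can be introduced at the top level: either the dependency is internal to one disjunct, in which case the induction hypothesis gives an $M \in \depSet{q'}$ that can be extended by any $M' \in \depSet{q''}$ to yield a set in $\depSet{q}$ still containing both triples, or the dependency crosses disjuncts, in which case flipping $u$ forces satisfaction to migrate from one disjunct to the other; the minimal satisfying changeset then lies in the opposite disjunct and the rule $\bigcup_{M,M'}\{M \cup M'\}$ unites the relevant triples in a common member.

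The main obstacle is the $\Until$ and $\Weakuntil$ case, which combines the disjunctive and eventually-like behaviour with a left/right interaction. Given a semantic dependency at position $i$ in $\psi \Until \psi'$, I would split into subcases according to the position $k \geq i$ at which $\psi'$ fires in the witnessing sequence. Dependencies localised along the obligation prefix (within $\psi$ at positions $i \leq j < k$) are captured by the second union of the $\Until$ rule, which combines a $\true$-promoted left-hand triple $\{(u,x,\true)\}$ with an arbitrary $M' \in \depSet{q''}$; dependencies localised at the firing position inside $\psi'$ are captured by the third union over $\depSet{q''}$, in analogy to the $\Eventually$ case on $\psi'$; genuine cross-operand dependencies are captured by the first union $M \cup M'$. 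The $\Weakuntil$ analysis is identical since the only semantic difference, that $\psi'$ need never fire, adds no new cross-variable link. The delicate bookkeeping is verifying, in each sub-case, that the polarities $y,y'$ and offsets $x,x'$ delivered by the syntactic rule satisfy one of the four current-dependency patterns or the future-dependency pattern of \Cref{def:syntactic_dependencies}; I would do this by tracking which side of each union contributes the triple for $u$ and which contributes the triple for $v$, and by using the position of the semantic changeset to relate $x$ to $x'$.
\end{proofsketch}
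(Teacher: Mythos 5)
Your overall strategy---structural induction that matches a semantic dependency to a pair of triples lying in a common set $M \in \depSet{q}$ whose annotations fit one of the patterns of \Cref{def:syntactic_dependencies}---is essentially the paper's, and several cases ($\lor$, $\Next$, $\Eventually$, $\Globally\Eventually$) are handled as the paper handles them. However, your treatment of $\Globally$ relies on a step that fails. For $q = \Globally$, the rule adds only \emph{singleton} sets $\{(u,x,\true)\}$, one per triple of $\depSet{q'}$; it never places $\true$-promoted triples for two distinct variables into a common member of $\depSet{q}$. Since a syntactic dependency requires $(u,x,y)$ and $(v,x',y')$ to lie in the \emph{same} set $M$, the ``$y=y'=\true$'' pattern is simply not available from the promotion under a $\Globally$, so the mechanism you describe cannot be carried out. (Nor should it be: for $\Globally(\neg u \lor \Next v)$ the only semantic dependency from $u$ to $v$ is a future one with offset $1$, and it is witnessed by the preserved set $\{(u,0,\false),(v,1,\false)\}$, not by any $\true$-flagged pair, which would wrongly suggest an unconstrained offset.) The correct argument---the one the paper gives in its final case---is a localization step: flipping $u$ at position $i$ can violate $\Globally\psi$ only by violating $\psi$ at some position $k\le i$, and a minimal satisfying changeset must restore $\psi$ at that same position, so the semantic dependency descends to $\psi$; the induction hypothesis then yields a syntactic dependency inside $\psi$, which survives at the top level because $\depSet{q'} \subseteq \depSet{q}$.

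The same confusion recurs in your $\Until$/$\Weakuntil$ case: you attribute dependencies internal to the left operand to the second union $\bigcup_{M}\bigcup_{M'}\bigcup_{(u,x,y)\in M}\{\{(u,x,\true)\}\cup M'\}$, but each such set retains only a single (promoted) left-hand triple, so it can never witness a dependency between two variables of $\psi$. Those dependencies are carried by the first union $\{M \cup M'\}$, which preserves whole sets of $\depSet{q'}$, again combined with the localization/IH argument; the second union serves cross-operand dependencies (left variable versus right-hand set), and the third union serves dependencies inside $\psi'$ whose witnessing position may move. Once these case attributions are repaired, your induction goes through and essentially coincides with the paper's proof, which organizes the same bookkeeping by where $u$ and $v$ sit relative to unbounded temporal operators rather than by the root operator, and which reduces to the induction hypothesis exactly in the $\Globally$/left-of-$\Until$ situations you mishandle.
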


Yet, the above definition of syntactic dependencies does not capture all semantic dependencies in general. Particularly, semantic dependencies ranging over several conjuncts cannot be detected.
To capture all dependencies, we build the \emph{syntactic dependency graph} analogously to the semantic one, additionally annotating future dependency edges with their offsets.
We build the \emph{transitive closure} over output variables: Let $u, v \in \out$ and let there be $u_1, \dots, u_j \in \out$ for some $j \geq 1$ with $(u,u_1) \in \syntacticEdges{\varphi}\!$, $(u_j,v) \in \syntacticEdges{\varphi}\!$, and $(u_i,u_{i+1}) \in \syntacticEdges{\varphi}$ for all $1 \leq i < j$. If all these edges are present dependency edges, then $(u,v) \in \syntacticPresentEdges{\varphi}$. Otherwise, $(u,v) \in \syntacticFutureEdges{\varphi}$. 
If there are connecting edges for $u$ and $v$ containing a future dependency cycle, the offset of the transitive edge is $\infty$. Otherwise, it is the sum of the offsets of the connecting edges.
To capture the synergy of dependencies, let $u,v,w \in V$ be variables with $u, w \in \out$ and $u \neq v$ or $u \neq w$. Let $(u,w) \in \syntacticFutureEdges{\varphi}$ with offset $x$ and $(v,w) \in \syntacticFutureEdges{\varphi}$ with offset $y$. If $x \neq \infty$ and $y \neq \infty$, then, if $x=y$, add $(u,v)$ and $(v,u)$ to $\syntacticPresentEdges{\varphi}$, and if $x < y$ or $x > y$, add $(v,u)$ or $(u,v)$ to $\syntacticFutureEdges{\varphi}$ with offset $y-x$ or $x-y$, respectively. If $x = \infty$, add both $(u,v),(v,u)$ to $\syntacticPresentEdges{\varphi}$ and $\syntacticFutureEdges{\varphi}$ with offset $\infty$.
Build the transitive closure again.

The resulting syntactic dependency graph for the self-driving car is shown in \Cref{fig:syn_dep_graph}.
Unlike the semantic one, it contains outgoing dependencies from input variables. While such dependencies are not relevant for component selection and thus are not defined in the semantic algorithm, they are needed to derive dependencies \emph{to} input variables with the syntactic technique.

After the derivation of further dependencies in the dependency graph, every semantic dependency has a syntactic counterpart, even if it ranges over several conjuncts. Intuitively, the derivation of a minimal satisfying changeset for a semantic dependency induces several separate semantic present and future dependencies that only affect single conjuncts of the specification. With \Cref{lem:sem_implies_syn_single_conjunct}, the claim follows by induction on the number of these separate dependencies.
For further details, we refer the reader to \Cref{app:syntactic_dependencies}.
\begin{theorem}\label{thm:sem_implies_syn}
	Let $\varphi$ be an LTL formula and let $u,v \in \out$.
	If $(u,v) \in \semanticPresentEdges{\varphi}\!$, then $(u,v) \in \syntacticPresentEdges{\varphi}$.
	If $(u,v) \in \semanticFutureEdges{\varphi}\!$, then $(u,v) \in \syntacticFutureEdges{\varphi}$.
	If $u$ depends semantically on the input, then there are variables $w \in \out$, $w' \in \inp$ such that $(w,w') \in \syntacticEdges{\varphi}\!$.
\end{theorem}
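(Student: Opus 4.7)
The plan is to prove all three statements together by reducing the general (multi-conjunct) case to the single-conjunct case covered by \Cref{lem:sem_implies_syn_single_conjunct}, and then using the transitive closure and derivation rules by which $\syntacticEdges{\varphi}$ is defined. Throughout, I fix $\varphi = \varphi_1 \land \dots \land \varphi_r$ in negation normal form and focus on statement~2; statements~1 and~3 then follow by essentially the same argument with offset $0$ respectively with the input variable playing the role of $v$.

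First, I would unfold the semantic dependency: assume $(u,v) \in \semanticFutureEdges{\varphi}$, so there are $\gamma$, $\eta$, $\eta'$, $\branchingPi{\gamma}{}$, $\branchingPiP{\gamma}{}$ and a minimal satisfying changeset $(P,F)$ with respect to $\varphi$, $\gamma$, $\branchingPiP{\gamma}{}$, $i$ such that $v \in F$. Because $\varphi$ is a conjunction and $\gamma \cup \branchingPi{\gamma}{} \models \varphi$, the violation $\gamma \cup \branchingPiP{\gamma}{} \not\models \varphi$ is witnessed by at least one conjunct $\varphi_j$. The minimality of $(P,F)$ implies that each element of $P \cup F$ is needed to repair \emph{some} conjunct that is violated after the flip of $u$: if an element were useless everywhere, dropping it from $(P,F)$ would still yield a satisfying correction, contradicting minimality.

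Second, I would associate with each $w \in P \cup F$ a conjunct $\varphi_{j(w)}$ whose satisfaction on $\gamma \cup \branchingPiP{\gamma}{}$ requires flipping $w$. Each such pair $(u, w)$ then constitutes a semantic dependency \emph{within a single conjunct} $\varphi_{j(w)}$ (possibly with an auxiliary path linking the fault back to $u$ through the chain of conjuncts that share variables with $\varphi_{j(w)}$). I would formalize this as follows: build a sequence of variables $u = w_0, w_1, \dots, w_\ell = v$ such that for each consecutive pair $(w_k, w_{k+1})$ there is a single conjunct $\varphi_{j_k}$ and a semantic present or future dependency from $w_k$ to $w_{k+1}$ with respect to $\varphi_{j_k}$. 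Since $\varphi_{j_k}$ contains no conjunction, \Cref{lem:sem_implies_syn_single_conjunct} yields a syntactic edge $(w_k, w_{k+1}) \in \syntacticEdges{\varphi_{j_k}}$ of the correct type and, for future edges, with an offset matching the displacement of the corresponding triples in $D_{q_k}$.

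Third, I would transfer these per-conjunct syntactic edges to $\syntacticEdges{\varphi}$ and combine them. Because $\varphi$ is the conjunction of the $\varphi_{j_k}$, every syntactic edge of $\varphi_{j_k}$ is, by the $\depSet{\land}$-rule in \Cref{def:syntactic_dependencies}, also an edge of the root dependency set of $\varphi$ and hence present in $\syntacticEdges{\varphi}$ prior to any closure step. The chain $w_0, \dots, w_\ell$ of syntactic edges is then collapsed into the single edge $(u,v) \in \syntacticFutureEdges{\varphi}$ by the transitive-closure rule, where the offset is additive along the chain (and $\infty$ whenever the chain passes through a future cycle, which matches the definition). If the connecting chain requires variables $w_k$ that are shared via a common ``target'' variable $w$ appearing in two conjuncts with different offsets, the synergy rule supplies the missing edge between the $w_k$'s before transitivity is applied. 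I would then do an induction on $\ell$, the number of conjuncts the minimal satisfying changeset crosses, with the base case $\ell = 1$ handed off directly to \Cref{lem:sem_implies_syn_single_conjunct}. Statement~3 is analogous: the chain ends at an input $w' = v$, and we only need to exhibit \emph{some} syntactic edge from an output into an input, which is precisely what the chain provides at its last syntactic step.

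The main obstacle, which I would isolate as a technical sublemma, is the decomposition step: given a minimal satisfying changeset $(P,F)$ that globally repairs $\varphi$ on $\gamma \cup \branchingPiP{\gamma}{}$, produce the chain $w_0,\dots,w_\ell$ and the assignment of conjuncts $\varphi_{j_k}$ such that each link is a genuine single-conjunct semantic dependency with the right offset. The subtlety is that the minimality of $(P,F)$ is a global condition over $\varphi$, not per conjunct, so one has to check that restricting attention to one violated conjunct at a time still yields a \emph{minimal} satisfying changeset for that conjunct (possibly after restricting $P$ and $F$), so that \Cref{lem:sem_implies_syn_single_conjunct} is applicable. Once this sublemma is in place, the offset bookkeeping through the transitive closure and the synergy rule is routine, and the three statements of the theorem follow uniformly.
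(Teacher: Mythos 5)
Your overall strategy coincides with the paper's: reduce to \Cref{lem:sem_implies_syn_single_conjunct} one conjunct at a time and then recombine the per-conjunct edges via the transitive closure and the derivation rules. However, the decomposition step that you yourself isolate as ``the main obstacle'' is not merely left open --- as formulated it is wrong, and repairing it is where essentially all of the content of the paper's proof lies. You postulate a chain $u = w_0, \dots, w_\ell = v$ in which every link is a single-conjunct semantic present or future dependency \emph{from} $w_k$ \emph{to} $w_{k+1}$. Such a forward-pointing chain need not exist: the changesets of \Cref{def:minimality,def:semantic_dependencies} only permit corrections at the violation position or strictly later, so when a cascaded repair (changing $w_k$ at some position $i_k > i$) violates a further conjunct whose repair requires changing $w_{k+1}$ at a position \emph{earlier} than $i_k$, the per-conjunct semantic dependency runs backwards, and \Cref{lem:sem_implies_syn_single_conjunct} only yields the edge $(w_{k+1}, w_k) \in \syntacticFutureEdges{\varphi}$. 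Plain transitivity then gives nothing; one must apply the derivation (``synergy'') rule to the two future edges $(u, w_k)$ and $(w_{k+1}, w_k)$ with common target $w_k$ and compare their offsets to decide whether the derived edge is a present or a future edge. That offset comparison is precisely what makes the first two claims type-preserving (semantic present edges go to syntactic present edges, future to future), so it cannot be waved off as routine bookkeeping --- it is the reason the derivation rules exist at all, and your sketch treats them only as a side patch for ``shared targets''.

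The paper makes the decomposition precise with a tree of violation clusters rather than a chain: each cluster collects the conjuncts newly violated by one variable change; each conjunct of a cluster gets its own minimal satisfying changeset, which is a \emph{single} variable because the conjunct is conjunction-free, taken with respect to freshly constructed traces $\clusterTrace{\gamma}{k}$ and $\clusterTraceP{\gamma}{k}$ rather than by restricting the global $(P,F)$ --- this is how the per-conjunct minimality issue you flag is actually resolved. The induction is then on the depth of this cluster tree, with a case split on whether the required change lies before, at, or after the violating position $i_k$: the forward cases use transitivity, the backward case uses the derivation rule with explicit offset arithmetic. The third claim is handled by a separate, simpler argument (a conjunct satisfied under $\gamma'$ but violated under $\gamma$ must mention an output variable together with an input variable on which the two inputs differ, and \Cref{lem:sem_implies_syn_single_conjunct} applies to it directly), which your ``chain ends at an input'' remark only approximates. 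As it stands, your argument does not go through without supplying this case analysis and the per-conjunct trace constructions.
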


Thus, since semantic dependencies have a syntactic counterpart, completeness of incremental synthesis using syntactic dependency analysis for selecting components follows directly with \Cref{thm:completeness}.
However, the syntactic analysis is a conservative overapproximation of the semantic dependencies. 
This can be easily seen when comparing the semantic and syntactic dependency graphs for the self-driving car shown in \Cref{fig:graphs}. For instance, there is a syntactic future dependency from $\mathit{acc}$ to $\mathit{in\_curve}$ while there is no such semantic dependency.
In particular, the derivation rules are blamable for the overapproximation.


\section{Specification Simplification}\label{sec:specification_simplification}

In this section, we identify conjuncts that are not relevant for the component~$p$ under consideration to reduce the size of the specification.
In general, leaving out conjuncts is not sound since the missing conjuncts may invalidate admissibility of the specification~\cite{DammF14}.
However, non-admissible components cannot become admissible by leaving out conjuncts that do not refer to output variables of $p$:
\begin{theorem}[\cite{DammF14}]
	Let $\varphi$ be an LTL formula over $V \setminus out(p)$ and let $\psi$ be an LTL formula over $V$. If $\psi$ is admissible, then $\varphi \land \psi$ is admissible as well.
\end{theorem}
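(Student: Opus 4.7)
The plan is to show that any dominant strategy for $\psi$ is also dominant for $\varphi \land \psi$. The crucial observation is that, because $\varphi$ contains no output variable of $p$, the truth value of $\varphi$ along a computation of $p$ does not depend on which strategy $p$ plays: any two computations on the same input sequence $\gamma \in (2^{V \setminus \out(p)})^\omega$ agree on the variables in $V \setminus \out(p)$, so they agree on $\varphi$.

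First, I would invoke admissibility of $\psi$ to fix a dominant strategy $s$ for $p$ and $\psi$. I then aim to verify that $s$ satisfies the definition of dominance for $\varphi \land \psi$ as well. To this end, let $t$ be an arbitrary strategy for $p$ and let $\gamma \in (2^{V \setminus \out(p)})^\omega$ be an input sequence with $\comp(t,\gamma) \models \varphi \land \psi$. I must show $\comp(s,\gamma) \models \varphi \land \psi$.

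Second, I would split the conjunction. From $\comp(t,\gamma) \models \psi$ and dominance of $s$ for $\psi$, I directly obtain $\comp(s,\gamma) \models \psi$. For the $\varphi$-conjunct, I would appeal to the definition of computation from the Preliminaries: $\comp(s,\gamma)$ and $\comp(t,\gamma)$ coincide on every variable in $V \setminus \out(p)$, since the strategies for $p$ only determine valuations of $\out(p)$. Because $\varphi$ is a formula over $V \setminus \out(p)$, its satisfaction depends only on this projection. Hence $\comp(t,\gamma) \models \varphi$ transfers to $\comp(s,\gamma) \models \varphi$, and together with the first part I conclude $\comp(s,\gamma) \models \varphi \land \psi$. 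Since $t$ and $\gamma$ were arbitrary, $s$ is dominant for $\varphi \land \psi$, establishing admissibility.

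I do not foresee a real obstacle here; the argument is essentially bookkeeping around the definitions. The one point that needs to be stated carefully is the projection observation: the computation of a strategy restricted to $V \setminus \out(p)$ is a function of $\gamma$ alone and independent of the strategy, which follows immediately from the Moore-machine formalization of strategies and the definition of $\comp(\cdot,\cdot)$.
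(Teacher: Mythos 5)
Your argument is correct: since $\varphi$ mentions no variable in $\out(p)$, the projection of any computation to $V \setminus \out(p)$ is just the input sequence $\gamma$ itself, so a dominant strategy for $\psi$ remains dominant for $\varphi \land \psi$, which gives admissibility. Note that the paper itself does not prove this statement --- it imports it from~\cite{DammF14} --- and your proof is precisely the natural bookkeeping argument one would expect for it, so there is nothing to object to.
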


Yet, an admissible component may become non-admissible. 
For instance, consider the specification $\varphi = \Globally (a \leftrightarrow \Next i) \land \Globally i$, where $i$ is an input variable and $a$ is an output variable. While always outputting $a$ is a dominant strategy for $\varphi$, leaving out $\Globally i$ yields non-admissibility of $\varphi$ since a dominant strategy for $a$ needs to predict $i$.
A conjunct that does not contain variables on which the component under consideration depends, however, can be eliminated since its satisfaction does not influence the admissibility of the specification for $p$:
\begin{theorem}\label{thm:no_dep_dropping}
	Let $\varphi$ be an LTL formula such that $\varphi = \psi \land \psi'$, where $\psi$ is an LTL formula over $V' \subseteq V \setminus \out(p)$ not containing assumption conjuncts and $\psi'$ is an LTL formula over $V$. If for all $u \in \out(p)$ and $v \in \out \setminus \out(p)$, $u$ neither depends on the future valuation of $v$, nor on the present valuation of $v$ if $\rankImplementation{u} \leq \rankImplementation{v}$, and if $\varphi$ is realizable for the whole system, then $\psi'$ is admissible for $p$ if, and only if, $\varphi$ is admissible for $p$.
\end{theorem}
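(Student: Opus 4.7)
The plan is to prove the two directions of the equivalence separately. The direction ``$\psi'$ admissible $\Rightarrow \varphi$ admissible'' is immediate from the preceding theorem: instantiating it with $\psi$ (over $V \setminus \out(p)$) in place of its formula over $V \setminus \out(p)$ and with $\psi'$ (over $V$ and admissible) in place of its admissible formula yields admissibility of $\psi \land \psi' = \varphi$. Neither realizability nor the no-dependency hypothesis is needed for this direction.

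For the converse direction, I would fix a dominant strategy $s$ for $\varphi$ for $p$---its existence is already guaranteed by \Cref{thm:independend_component_admissible} since the no-dependency hypothesis and realizability are in force---and argue that the very same $s$ is also dominant for $\psi'$. Let $t$ be any strategy and $\gamma \in (2^{V \setminus \out(p)})^\omega$ any input with $\comp(t,\gamma) \models \psi'$; the target is $\comp(s,\gamma) \models \psi'$. Because $\psi$ is an LTL formula over $V \setminus \out(p)$, the truth of $\psi$ on every computation over $\gamma$ is fixed by $\gamma$ alone. In the case $\gamma \models \psi$, we have $\comp(t,\gamma) \models \psi \land \psi' = \varphi$, and dominance of $s$ for $\varphi$ at $\gamma$ yields $\comp(s,\gamma) \models \varphi$, hence $\comp(s,\gamma) \models \psi'$.

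The main case is $\gamma \not\models \psi$, where the dominance of $s$ for $\varphi$ is vacuous on this particular input. Here I would exploit realizability to obtain a winning strategy $W$ for the whole system; the $\out \setminus \out(p)$-projection of $W$'s run on $\gamma_\inp$ gives a sequence $\beta$ such that the modified input $\gamma^* := \gamma_\inp \cup \beta$ satisfies $\psi$. Applying dominance of $s$ for $\varphi$ at $\gamma^*$, with the witness strategy derived from $W$'s $\out(p)$-outputs, yields $\comp(s,\gamma^*) \models \varphi$ and hence $\comp(s,\gamma^*) \models \psi'$. The no-dependency hypothesis is then used to transfer this back to $\gamma$: following the construction underlying \Cref{thm:no_future_dep_implies_admissibility}, $s$ may be chosen insensitive to those $\out \setminus \out(p)$-coordinates on which no present or future dependency exists, so that its $\out(p)$-output on $\gamma$ and on $\gamma^*$ coincide; combined with the witness $t$ on $\gamma$, this lets one lift $\psi'$-satisfaction from $\gamma^*$ back to $\gamma$. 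The main obstacle is exactly this final transfer step, because $\psi'$ is permitted to mention variables in $\out \setminus \out(p)$ on which $\gamma$ and $\gamma^*$ disagree by construction; turning the no-dependency condition into a structural property of an explicitly constructed dominant strategy, in the spirit of the proof of \Cref{thm:no_future_dep_implies_admissibility} and using the minimal satisfying changeset formalism of \Cref{def:minimality}, is where the technical heart of the argument lies.
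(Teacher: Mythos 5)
Your first direction is fine and matches the paper in substance (the paper re-derives the cited result in one line: the truth value of $\psi$ is fixed by $p$'s input alone, so a dominant strategy for $\psi'$ is dominant for $\varphi$). The problem is the converse direction, and the gap you yourself flag at the end is not a technicality but the whole theorem. Your plan is to take a dominant strategy $s$ for $\varphi$ and show that this same $s$ is dominant for $\psi'$; that claim is false in general. On any input $\gamma$ with $\gamma \not\models \psi$, the formula $\varphi = \psi \land \psi'$ is unsatisfiable no matter what $p$ outputs, so dominance for $\varphi$ places no constraint whatsoever on $s$ there, and $s$ may freely violate $\psi'$ even though $\psi'$ is satisfiable on $\gamma$. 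Concretely, take $\psi = \Globally \neg c$ with $c \in \out \setminus \out(p)$ and $\psi' = \Globally a$ with $a \in \out(p)$: the strategy that outputs $a$ until it sees $c$ and then outputs $\neg a$ forever is dominant for $\varphi$ (it wins whenever winning is possible) but not dominant for $\psi'$. So the detour via a winning strategy $W$, the modified input $\gamma^*$, and a transfer back to $\gamma$ cannot be completed for an arbitrary dominant $s$, and choosing $s$ ``insensitive'' to the non-depended-on coordinates is precisely the construction you would still have to carry out --- it is not supplied by \Cref{thm:no_future_dep_implies_admissibility} as stated, and even then $\psi'$ may mention the coordinates on which $\gamma$ and $\gamma^*$ differ. (A smaller slip: \Cref{thm:independend_component_admissible} also requires $\rankSynthesis{p'} \leq \rankSynthesis{p}$ for all $p'$, which is not a hypothesis here; in this direction you should simply assume $\varphi$ admissible.)

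A telling symptom is that your argument never uses the hypothesis that $\psi$ contains no assumption conjuncts, which is essential. The paper proves this direction by contraposition and stays at the level of the dependency characterization rather than manipulating a given strategy: if $\psi'$ is not admissible for $p$, then since there is no unresolved dependency from $\out(p)$ to other outputs, \Cref{thm:no_future_dep_implies_admissibility} forces any strategy for $p$ to predict an input. Realizability of $\varphi$ means this prediction need could only be removed either by different outputs of other components --- impossible, as there are no such dependencies --- or by $\psi$ restricting the input behavior, which only assumption conjuncts can do, contradicting the hypothesis on $\psi$; hence $\varphi$ is not admissible for $p$ either. If you want to salvage a direct construction, you would have to build the canonical dominant strategy for $\psi'$ (as in the proof of \Cref{thm:no_future_dep_implies_admissibility}) and argue via that characterization, which essentially reproduces the paper's contrapositive argument.
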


If $\psi'$ is admissible, admissibility of $\varphi$ follows since the truth value of $\psi$ is solely determined by the input of $p$. Otherwise, a strategy for $p$ has to predict the input. Since $p$ is independent of all other components, $\varphi$ can only be realizable if $\psi$ restricts the input behavior, contradicting the assumption that it does not contain assumption conjuncts.
For further details, we refer to \Cref{app:specification_simplification}. 
This directly leads to the following observation:
\begin{corollary}\label{cor:no_dep_winning}
	Let $\varphi = \psi \land \psi'$ be an LTL formula inducing two components $p,p'$ with $\rankSynthesis{p} = \rankSynthesis{p'}$ for either the semantic or the syntactic technique, where $\psi$ and $\psi'$ range over $V \setminus \out(p')$ and $V \setminus \out(p)$, respectively. If $\varphi$ is realizable, then there are winning strategies for $p$ and $p'$ for $\psi$ and $\psi'$, respectively.
\end{corollary}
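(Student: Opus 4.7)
My plan is to construct winning strategies for $p$ and $p'$ directly by projecting a whole-system winning strategy for $\varphi$, exploiting the fact that $\psi$ is syntactically ignorant of $\out(p')$ and vice versa. The key observation is that realizability of $\varphi$ yields a strategy whose projection onto one side of the decomposition naturally ignores the other side, precisely because the two formulas have been separated along the output partition.

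First, I would use realizability of $\varphi$ to obtain a winning strategy $s^* = s^*_p \pc s^*_{p'}$ for the composite system. The hypothesis $\rankSynthesis{p} = \rankSynthesis{p'}$ implies, by either the semantic or the syntactic component selection algorithm, that $p$ and $p'$ correspond to distinct strongly connected components placed at the same level of the dependency DAG; in particular, there are no unresolved dependency edges between $\out(p)$ and $\out(p')$, and the parallel composition is well-formed. Next, I would define a strategy $s_p$ for $p$ that operates on an $\inp$-history alone: $s_p$ internally simulates $s^*_{p'}$ (feeding it the observed inputs and, where the implementation order requires, $s_p$'s own past outputs, which it retains in its state) to reconstruct the $\out(p')$-values that $s^*_p$ would have seen in the composite run, then returns whatever $s^*_p$ would output.

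For any $\gamma \in (2^\inp)^\omega$, the trace $\comp(s_p, \gamma)$ agrees with $\comp(s^*, \gamma)$ on $V \setminus \out(p')$ by construction. Since $s^*$ is winning for $\varphi$, we have $\comp(s^*, \gamma) \models \psi$; because $\psi$ mentions no variable in $\out(p')$, this agreement on $V \setminus \out(p')$ yields $\comp(s_p, \gamma) \models \psi$, so $s_p$ is winning for $\psi$. The argument for $p'$ and $\psi'$ is symmetric. The main obstacle is verifying that the internal simulation of $s^*_{p'}$ inside $s_p$ is well-defined when the implementation order permits mutual visibility in a single step; here the equal-rank condition does the essential work, ruling out genuine dependency cycles between $p$ and $p'$ so that $s^*_p \pc s^*_{p'}$ admits a consistent evaluation order that $s_p$ can faithfully mirror.
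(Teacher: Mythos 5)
Your argument is correct, and it is in fact more elementary than the route the paper suggests: the paper presents \Cref{cor:no_dep_winning} as a direct consequence of \Cref{thm:no_dep_dropping}, i.e., it drops the conjunct not mentioning $\out(p)$, uses the admissibility equivalence together with realizability, and obtains winning strategies from dominant ones; you instead bypass the dominance machinery entirely and argue by projecting a whole-system winning strategy, which is the cleaner way to get the \emph{winning} (rather than merely dominant) conclusion. Two remarks on your write-up. First, the internal-simulation construction is unnecessary: realizability of $\varphi$ gives a single Moore machine over $\inp$ alone, so its projection onto $\out(p)$ is already a function of $\inp$-histories and there is nothing for $s_p$ to reconstruct; the only care needed is that $s_p$, viewed as a strategy of component $p$, simply ignores any additional inputs (in particular valuations of $\out(p')$) it may observe, and since $\psi$ ranges over $V \setminus \out(p') = \inp \cup \out(p)$ while $s_p$ is input-deterministic on $\inp$, $\comp(s_p,\gamma) \models \psi$ holds for \emph{every} $\gamma \in (2^{V \setminus \out(p)})^\omega$, not just for pure input sequences, which is what winning for a component requires. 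Second, and relatedly, the equal-rank hypothesis does not do the ``essential work'' you attribute to it: your projection argument goes through using only the syntactic separation of $\psi$ and $\psi'$ along the output partition; $\rankSynthesis{p} = \rankSynthesis{p'}$ matters for the corollary's role in the incremental pipeline (both components are synthesized without assumed strategies, and no implementation-order refinement between their outputs has been introduced), not for the existence proof itself. What the paper's route buys is uniformity with the specification-simplification results of \Cref{sec:specification_simplification}; what yours buys is a short, self-contained argument that avoids any appeal to dominance.
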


Moreover, in incremental synthesis the strategies of components with a lower rank in the synthesis order are provided to the component $p$ under consideration. Hence, if these strategies are winning for a conjunct, it may be eliminated from the specification for $p$ since its satisfaction is already guaranteed.
We refer to \Cref{app:specification_simplification} for further details.
\begin{theorem}\label{thm:winning_strat_dropping}
	Let $\varphi,\psi$ be LTL formulas over $V$. Let $s'$ be the parallel composition of the strategies for the components $p_i$ with $\rankSynthesis{p_i} < \rankSynthesis{p}$. If $s'$ is winning for $\varphi$, then there is a strategy $s$ such that $s' \pc s$ is dominant for $\psi$ if, and only if, there is a strategy $s$ such that $s' \pc s$ is dominant for $\varphi \land \psi$.
\end{theorem}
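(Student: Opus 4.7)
The plan is to establish both directions with the same witness $s$: the strategy given by the hypothesis of one side also serves as the witness for the other. The crucial observation enabling this is that, because $s'$ is winning for $\varphi$, every composition $s' \pc \tilde{s}$ on every input $\gamma$ already produces a computation satisfying $\varphi$. Hence, along any computation extending $s'$, the conditions $\comp(\cdot,\gamma)\models\psi$ and $\comp(\cdot,\gamma)\models\varphi\land\psi$ coincide, and the iff reduces to a routine unfolding of the definition of dominance.

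For the forward direction, I would assume that $s' \pc s$ is dominant for $\psi$ and pick any alternative strategy $\tilde{s}$ together with an input $\gamma$ such that $\comp(s' \pc \tilde{s}, \gamma) \models \varphi \land \psi$. In particular $\comp(s' \pc \tilde{s}, \gamma) \models \psi$, so dominance for $\psi$ yields $\comp(s' \pc s, \gamma) \models \psi$. Since $s'$ is winning for $\varphi$, we additionally have $\comp(s' \pc s, \gamma) \models \varphi$, and conjoining these gives $\comp(s' \pc s, \gamma) \models \varphi \land \psi$, establishing dominance of $s' \pc s$ for $\varphi \land \psi$.

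For the backward direction, assume $s' \pc s$ is dominant for $\varphi \land \psi$, and suppose $\comp(s' \pc \tilde{s}, \gamma) \models \psi$ for some $\tilde{s}$ and $\gamma$. Winning of $s'$ for $\varphi$ supplies $\comp(s' \pc \tilde{s}, \gamma) \models \varphi$, hence $\comp(s' \pc \tilde{s}, \gamma) \models \varphi \land \psi$. Dominance then gives $\comp(s' \pc s, \gamma) \models \varphi \land \psi$, so in particular $\models \psi$, i.e.\ $s' \pc s$ is dominant for $\psi$.

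The main obstacle I anticipate is that the definition of dominance compares a strategy against \emph{all} strategies over the same outputs, not only against compositions $s' \pc \tilde{s}$ that extend the fixed $s'$. An adversary $t$ that diverges from $s'$ on the lower-rank outputs could, in principle, satisfy $\psi$ while falsifying $\varphi$, which would break the clean argument in the backward direction. In the incremental setting this is benign, because $s'$ is committed before $s$ is synthesized and the effective alternatives the algorithm must beat are exactly the compositions extending $s'$; formally discharging this point---either by showing that any $t$ with $\comp(t,\gamma)\models\psi$ can be matched by a composition $s' \pc \tilde{s}$ without losing $\psi$, or by invoking the restriction implicit in the incremental algorithm's synthesis order---is where the careful bookkeeping lies.
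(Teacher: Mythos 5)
Your argument is essentially the paper's: the appendix proof likewise observes that, since $s'$ is winning for $\varphi$, every composition $s' \pc s$ is winning for $\varphi$, hence $\comp(s' \pc s,\gamma)\models\varphi\land\psi$ if and only if $\comp(s' \pc s,\gamma)\models\psi$ for all $\gamma$, and from this concludes the equivalence of the two dominance statements with the same witness $s$. The subtlety you flag in your last paragraph---that dominance quantifies over alternative strategies for the combined outputs which need not extend $s'$, which is what makes the backward direction delicate---is not discharged in the paper's proof either (it implicitly argues as if the relevant alternatives were of the form $s' \pc \tilde{s}$), so your write-up matches the published argument and is, if anything, more explicit about where the careful bookkeeping would be needed.
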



\section{Experimental Results}
We implemented a prototype of the incremental synthesis algorithm.
It expects an LTL specification as well as a decomposition of the system and a synthesis order as input. Our prototype extends the state-of-the-art synthesis tool BoSy~\cite{FaymonvilleFT17} to the synthesis of dominant strategies by rewriting the specification as described in Appendix~\ref{app:dominant_strategies}. Furthermore, it converts the synthesized strategy from the \textsc{Aiger}-circuit produced by our extension of BoSy to an equivalent LTL formula that is added to the specification of the next component.

\begin{table}[t]
\caption{Experimental results on scalable benchmarks. Reported is the parameter and the time in seconds. We used a machine with a 3.1 GHz Dual-Core Intel Core i5 processor and 16 GB of RAM, and a timeout of 60 minutes.\\}\label{table:results}
\centering
\newcommand\Tstrut{\rule{0pt}{2.75ex}}         
\newcommand\Bstrut{\rule[-0.95ex]{0pt}{0pt}}   
\begin{tabular}{lc|>{\raggedleft}m{1cm}r}
	Benchmark & Parameter~ & BoSy & ~~Incremental Synthesis\Bstrut\\
	\hline\hline
	n-ary Latch & 2 & \bfseries{2.61} & 4.76\Tstrut\\
	& 3 & \bfseries{3.66} & 6.58\\
	& 4 & 11.55 & \bfseries{8.74}\\
	& 5 & TO & \bfseries{10.98}\\
	& $\dots$ & $\dots$ & $\dots$\\
	& 1104 & TO & \bfseries{3599.04}\Bstrut\\
	\hline
	Generalized Buffer & 1 & 37.04 & \bfseries{5.08}\Tstrut\\
	& 2 & TO & \bfseries{6.21}\\
	& 3 & TO & \bfseries{66.03}\Bstrut\\
	\hline
	Sensors & 2 & \bfseries{1.99} & 6.08\Tstrut\\
	& 3 & \bfseries{2.31} & 8.79\\
	& 4 & \bfseries{6.99} & 11.73\\
	& 5 & 92.79 & \bfseries{16.99}\\
	& 6 & TO & \bfseries{43.50}\\
	& 7 & TO & \bfseries{2293.85}\Bstrut\\
	\hline
	Robot Fleet & 2 & \bfseries{2.49} & 6.25\Tstrut\\
	& 3 & TO & \bfseries{10.51}\\
	& 4 & TO & \bfseries{269.09}\Bstrut
\end{tabular}
\end{table}

We compare our prototype to BoSy on four scalable benchmarks. The results are presented in \Cref{table:results}. The first two benchmarks stem from the reactive synthesis competition (SYNTCOMP~2018)~\cite{SYNTCOMP2018}. The latch is parameterized in the number of bits and the Generalized Buffer in the number of receivers.
For the $n$-ary latch, both the semantic and the syntactic component selection algorithms identify $n$ separate components, one for each bit of the latch. For the Generalized Buffer, both techniques identify two components, one for the communication with the senders and one for the communication with the receivers.
After simplifying the specification using \Cref{thm:no_dep_dropping}, we are able to synthesize separate winning strategies for the components for both benchmarks, making use of \Cref{cor:no_dep_winning}. 
The incremental synthesis approach clearly outperforms BoSy's classical bounded synthesis approach for the Generalized Buffer in all cases. For the $n$-ary latch, the advantage becomes clear from $n = 4$ on.

Furthermore, we consider a benchmark describing $n$ sensors and a managing unit that requests and collects sensor data.
The formal specification is given in \Cref{app:experiments}.
The semantic component selection technique identifies $n$ separate components for the sensors as well as a component for the managing unit that depends on the other components. 
For this decomposition, the incremental synthesis approach outperforms BoSy for $n \geq 5$.
The syntactic technique, however, does not identify the separability of the sensors from the managing unit due to the overapproximation in the derivation rules.  

Lastly, we consider a benchmark describing a fleet of $n$ robots that must not collide with a further robot crossing their way. The formal specification is given in \Cref{app:experiments}.
Both the semantic and the syntactic technique identify $n$ separate components for the robots in the fleet as well as a component for the further robot depending on the former components.
Our prototype outperforms BoSy from $n \geq 3$ on. It still terminates in less than 5 minutes when BoSy is not able to synthesize a strategy within 60 minutes.


\section{Conclusions}

We have presented an incremental synthesis algorithm that reduces the complexity of synthesis by decomposing large systems. Furthermore, it is, unlike compositional approaches, applicable if the components depend on the strategies of other components.
We have introduced two techniques to select the components, one based on a semantic dependency analysis of the output variables and one based on a syntactic analysis of the specification.
Both induce a synthesis order that guarantees soundness and completeness of incremental synthesis.
Moreover, we have presented rules for reducing the size of the specification for the components.
We have implemented a prototype of the algorithm and compared it to a state-of-the-art synthesis tool. Our experiments clearly demonstrates the advantage of incremental synthesis over classical synthesis for large systems.
The prototype uses a bounded synthesis approach. However, the incremental synthesis algorithm applies to other synthesis approaches, e.g., explicit approaches as implemented in the state-of-the-art tool Strix~\cite{MeyerSL18}, as well if they are extended with the possibility of synthesizing dominant strategies.


\bibliographystyle{splncs04}
\bibliography{bib}


\appendix
\section{Synthesis of Dominant Strategies}\label{app:dominant_strategies}

\subsection*{Automaton Construction}

To construct the automaton $\mathcal{A}^\mathit{dom}_\varphi$ that accepts exactly the computations of dominant strategies, we first build the automaton $\mathcal{A}_\varphi$ that accounts for situations in which the strategy satisfies the specification on the given input. 
Second, we build a universal co-Büchi automaton that captures the cases in which no strategy at all satisfies the specification: 
Let $\varphi'$ be a version of $\varphi$ where every output variable $v$ of the component is replaced by a fresh variable $v'$. 
Intuitively, the primed variables define the outputs of an alternative strategy. 
We build the automaton $\mathcal{A}_{\neg \varphi'}$ with $\mathcal{L}(\mathcal{A}_{\neg \varphi'}) = \mathcal{L}(\neg \varphi')$ that, intuitively, accepts sequences that define an alternative strategy that violates the specification for the given input sequence. 
To consider \emph{all} alternative strategies instead of only a single one, we universally project to the unprimed variables in $\mathcal{A}_{\neg \varphi'}$. Intuitively, the resulting automaton quantifies universally over the primed variables since it always considers both valuations.
Formally, the universal projection is defined as follows:
\begin{definition}[Universal Projection]
	Let $\mathcal{A}=(Q,q_0,\delta,F)$ be a universal co-B\"uchi automaton over the alphabet $\Sigma = \Sigma_1 \cup \Sigma_2$ with two disjunctive sets $\Sigma_1$ and $\Sigma_2$. The \emph{universal projection} $\pi_i$ to $\Sigma_i$ is given by:
	\[ \pi_i = \left\{ (q_m, a, q_n) \in Q \times 2^{\Sigma_i} \times Q \mid \exists b \in 2^{\Sigma_j}.~(q_m,a \cup b,q_n) \in \delta \text{ for } j = 3-i \right\}. \]
\end{definition}

The resulting universal co-Büchi automaton $\mathcal{A}_{\pi(\neg \varphi')}$ thus accounts for situations in which no strategy at all satisfies the specification.
The universal co-Büchi automaton $\mathcal{A}^\mathit{dom}_\varphi$ that accepts exactly the computations of dominant strategies is then the product of $\mathcal{A}_\varphi$ and $\mathcal{A}_{\pi(\neg \varphi')}$.

\subsection*{Transforming the Specification}

Instead of constructing $\mathcal{A}^\mathit{dom}_\varphi$ by building the product of two universal co-Büchi automata as shown above, we can transform the specification $\varphi$ as follows in order to work with a single automaton:
Let $\varphi'$ again be the version of $\varphi$ where every output variable $v$ of the component is replaced by a fresh variable $v'$. Let $\psi := \varphi' \rightarrow \varphi$.
We build the automaton $\mathcal{A}_\psi$ with $\mathcal{L}(\mathcal{A}_\psi) = \mathcal{L}(\psi)$. Intuitively, it accepts sequences that either satisfy $\varphi$, or that define an alternative strategy that violates $\varphi$ for the given input sequence.
To consider \emph{all} alternative strategies instead of only a single one, we universally project to the unprimed variables in $\mathcal{A}_{\psi}$. The resulting universal co-Büchi automaton $\mathcal{A}_{\pi(\psi)}$ is then the desired universal co-Büchi automaton $\mathcal{A}^\mathit{dom}_\varphi$ that accepts exactly the computations of dominant strategies.

\subsection*{Proof of \Cref{thm:compositionality_liveness_only_for_one_process}}

\begin{proof}
	Towards a contradiction, suppose that $s_1 \pc s_2$ is not dominant for $\varphi$, i.e., there is an input sequence $\gamma \in (2^{V \setminus (\out(p_1) \cup \out(p_2))})^\omega$ of valuations of variables outside the control of the components $p_1$ and $p_2$ such that $\comp(s_1 \pc s_2,\gamma) \not\models \varphi$ while there exists a strategy $t$ for the whole system $p_1 \pc p_2$ that satisfies $\varphi$ on $\gamma$.
	Let $\safe$ and $\live$ be safety and liveness properties, such that $\varphi \equiv \safe \land \live$.
	If $\comp(s_1 \pc s_2,\gamma) \not\models \safe$, i.e., if the safety part of $\varphi$ is violated, then we directly obtain a contradiction by \Cref{thm:compositionality_safety}.
	Otherwise, if $\comp(s_1 \pc s_2, \gamma) \models \safe$ but $\comp(s_1 \pc s_2, \gamma) \not\models \live$, the violation of $\live$ and thus of $\varphi$ is only the fault of component $p_1$ by assumption. 
	Let $\gamma^{t_2}$,  $\gamma^{s_2}$ be the sequences of valuations of output variables of $p_2$ that $t$ and $s_1 \pc s_2$ produce on input $\gamma$, respectively. Let $t_1$ be the strategy producing the outputs of $\comp(t,\gamma)$ restricted to $out(p_1)$ on $\gamma\cup\gamma^{t_2}$. Then $\comp(s_1 \pc s_2, \gamma) = \comp(s_1, \gamma \cup \gamma^{s_2})$ and $\comp(t, \gamma) = \comp(t_1, \gamma \cup \gamma^{t_2})$ by construction.
	Since $\comp(t,\gamma)\models\varphi$ by assumption and thus $\comp(t,\gamma)\models\live$, we have $\comp(t_1,\gamma \cup \gamma^{t_2})\models\live$ as well. Therefore, $\comp(s_1,\gamma\cup\gamma^{t_2})\models\live$ since $s_1$ is dominant for $\varphi$ and $p_1$ by assumption. Since only the output variabes of $p_1$ affect $\live$, we have $\sigma \models \live$ if, and only if, $\sigma' \models \live$ for all sequences $\sigma, \sigma' \in 2^V$ with $\sigma \cap \inp = \sigma' \cap \inp$ and $\sigma \cap \out(p_1) = \sigma' \cap \out(p_1)$. In particular, $\comp(s_1, \gamma \cup \gamma^{t_2})\models\live$ if, and only if, $\comp(s_1, \gamma \cup \gamma^{s_2})\models\live$. Hence, $\comp(s_1 \pc s_2,\gamma)\models\live$ and thus $\comp(s_1 \pc s_2,\gamma)\models\varphi$, a contradiction.
\end{proof}


\section{Incremental Synthesis}\label{app:incremental_synthesis}

\subsection*{Proof of \Cref{thm:soundness}}

\begin{proof}
	Let $p_{i+1}, \dots, p_k$ be the components with the highest rank in the synthesis order and let $p_1, \dots, p_i$ be the other ones. By construction of the algorithm, $s_1 \pc \dots \pc s_i$ is dominant for $\varphi$ and $p_1 \pc \dots \pc p_i$. Furthermore, $s_1 \pc \dots \pc s_i \pc s_{i+\ell}$ is dominant for $\varphi$ and $p_1 \pc \dots \pc p_{i+\ell}$ for $1 \leq \ell \leq k$.
	If $i+1 = k$, i.e., if there is only a single component with the highest rank in the synthesis order, dominance of $s_1 \pc \dots \pc s_k$ thus follows directly.
	Otherwise, i.e., if there are at least two components with the highest rank, either $\varphi$ is a safety property or at most one of the components $p_{i+1}, \dots, p_k$ affects the liveness part of $\varphi$ by construction of the synthesis order. In the former case, dominance of $s_{i+1} \pc \dots \pc s_k$ follows with \Cref{thm:compositionality_safety}. In the latter case, it follows with \Cref{thm:compositionality_liveness_only_for_one_process}. Hence, $s_1 \pc \dots \pc s_k$ is dominant for $\varphi$.
	
	If $\varphi$ is realizable for the whole system, then every dominant strategy for the whole system is winning by the definition of dominance. Hence, in particular, $s_1 \pc \dots \pc s_k$ is winning.
\end{proof}


\section{Semantic Dependencies}\label{app:semantic_dependencies}

\subsection*{Proof of \Cref{thm:no_future_dep_implies_admissibility}}

We construct a strategy that, at every point in time, maximizes the set of input sequences for which there is an output sequence that satisfies the specification.
In order to show dominance of this strategy, we have to prove the equivalence of this set with the set of input sequences for which there is a strategy producing an output sequence that satisfies the specification.

\begin{lemma}\label{lem:same_input_sequences_for_outputs_and_strategies}
	Let $\varphi$ be an LTL formula and let $O \subseteq \out$. Given $\gamma \in (2^{V \setminus O})^*$ and $\sigma \in (2^{O})^*$ with $|\gamma| = |\sigma|$, let \[M_{\gamma,\sigma} = \{ \gamma' \in (2^{V \setminus O})^\omega \mid \exists \sigma' \in (2^{O})^\omega. ~ \gamma\gamma' \cup \sigma\sigma' \models \varphi \}\] \[L_{\gamma,\sigma} = \{ \gamma' \in (2^{V \setminus O})^\omega \mid \exists s. ~ \comp(s,\gamma\gamma') \models \varphi \land \exists \sigma'.~ \comp(\gamma\gamma') = \sigma\sigma'\}.\]
	If for all $u \in O$, $v \in \out \setminus O$, $u$ neither depends semantically on the future valuation of $v$, nor on the current valuation of $v$ if $\rankImplementation{u} \leq \rankImplementation{v}$, nor on the input, then $M_{\gamma,\sigma} = L_{\gamma,\sigma}$ for all $\gamma \in (2^{V \setminus O})^*$ and $\sigma \in (2^{O})^*$.
\end{lemma}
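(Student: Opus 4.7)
The plan is to prove the two inclusions separately. The easy direction is $L_{\gamma,\sigma} \subseteq M_{\gamma,\sigma}$: given $\gamma' \in L_{\gamma,\sigma}$ with witnessing strategy $s$, the computation $\comp(s,\gamma\gamma')$ has the form $\gamma\gamma' \cup \sigma\sigma'$ for the output projection $\sigma'$ of $s$ on the suffix, and this $\sigma'$ witnesses membership in $M_{\gamma,\sigma}$. No use of the dependency hypotheses is needed here.

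For the nontrivial inclusion $M_{\gamma,\sigma} \subseteq L_{\gamma,\sigma}$, the plan is to construct the ``maximizing'' strategy $s^*$ from the paragraph preceding \Cref{thm:no_future_dep_implies_admissibility}. Concretely, at every input history $\eta$ extending the prefix $\gamma$, I would define $s^*(\eta)$ to be an output $\alpha \in 2^O$ chosen to preserve feasibility: for every one-step continuation of inputs consistent with what $s^*$ may see next (taking the implementation order into account), there still exists a satisfying completion after committing to $\alpha$. On the prefix corresponding to $\gamma$, the output is forced to agree with $\sigma$; beyond that, $s^*$ follows the greedy rule. The core claim is that such an $\alpha$ exists at every step, from which it follows inductively that $\comp(s^*, \gamma\gamma') \models \varphi$ for every $\gamma' \in M_{\gamma,\sigma}$, yielding $\gamma' \in L_{\gamma,\sigma}$.

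The core claim is proved by contradiction. Suppose that at some step $i$, for every candidate $\alpha \in 2^O$ there is an input-extension $\gamma^\alpha$ such that committing to $\alpha$ kills all satisfying completions. Since $\gamma' \in M_{\gamma,\sigma}$, there is at least one $\alpha^*$ (namely $\sigma'_i$) that is feasible for the extension $\gamma'$, so the obstruction must come from \emph{different input extensions requiring different outputs}. Comparing a pair $\alpha_1, \alpha_2$ with their respective failing extensions, and minimizing the set of output positions that must be changed to recover satisfaction as in \Cref{def:minimality}, produces a pair of sequences exhibiting exactly the structure of either (i) a semantic future dependency from some $u \in O$ to a variable $v \in V \setminus O$, or (ii) a semantic present dependency to such a $v$ with $\rankImplementation{v} \geq \rankImplementation{u}$, or (iii) a semantic dependency of some $u \in O$ on an input variable. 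All three cases are excluded by hypothesis, giving the contradiction.

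The main obstacle will be the bookkeeping in the contradiction step: one must carefully package the set of output positions where the two attempted completions differ into a \emph{minimal} satisfying changeset in the sense of \Cref{def:minimality}, and identify at which position a witnessing variable $v \in V \setminus O$ appears that would force a cross-component dependency. A further subtlety is handling the implementation-order case cleanly: when $\rankImplementation{v} > \rankImplementation{u}$ the strategy is allowed to see $v$ in advance, so one has to restrict the greedy choice to inputs not distinguished by the information available to $s^*$ at step $i$, and only the remaining forced differences yield actual semantic dependencies. Once this combinatorial alignment is set up, the induction on $i$ is routine.
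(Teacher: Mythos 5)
Your easy inclusion $L_{\gamma,\sigma} \subseteq M_{\gamma,\sigma}$ is fine, but the hard direction has a genuine gap: you replace the statement actually needed by a strictly stronger one. The set $L_{\gamma,\sigma}$ only requires, for each individual continuation $\gamma'$, \emph{some} strategy (which may depend on $\gamma'$) whose computation extends $\sigma$ and satisfies $\varphi$; you instead try to build one uniform strategy $s^*$ whose greedy choice at every step preserves satisfiability \emph{for every} one-step input continuation. Such a choice need not exist even under the lemma's hypotheses. Take $\varphi = \Globally(a \leftrightarrow i)$ with $\out = O = \{a\}$ and input $i$: there are no other output variables, and $a$ does \emph{not} depend semantically on the input in the sense of \Cref{def:semantic_dependencies}, because flipping $a$ at a position makes $\varphi$ unsatisfiable for \emph{all} input sequences agreeing with the past (the ``while there are $\gamma'$\dots$\models\varphi$'' clause fails). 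So the hypotheses hold and the lemma is true, yet no output choice at step $0$ keeps a satisfying completion alive for both values of $i$, and your induction is stuck at the first step. This also breaks your contradiction step: the obstruction ``different input extensions require different outputs'' does not yield case (iii), since a forced reaction to the current (or otherwise unpredictable-but-unfixable) input is precisely the situation that \Cref{def:semantic_dependencies} does \emph{not} classify as a dependency on the input.

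The paper's proof avoids this by never constructing a uniform strategy inside the lemma. It fixes a hypothetical counterexample $\gamma' \in M_{\gamma,\sigma} \setminus L_{\gamma,\sigma}$, passes to the first position $i$ at which every strategy consistent with $\sigma$ would have to predict variables in $V \setminus O$, flips a single output $u \in O$ at position $i$ to obtain $\eta,\eta'$, and then case-splits: if no output completion of $\eta'$ satisfies $\varphi$, then either some alternative input would (semantic input dependency, excluded) or the flip is irrelevant and no prediction was needed; otherwise a minimal satisfying changeset $(P,F)$ as in \Cref{def:minimality} exists, and each subcase either exhibits a forbidden future/present dependency to some $v \notin O$ with $\rankImplementation{u} \leq \rankImplementation{v}$, or shows $P \cup F \subseteq O$ (or $F = \emptyset$ with all present dependencies resolved by the implementation order), so a witnessing strategy for this particular $\gamma'$ exists after all --- contradiction. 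The uniform ``maximizing'' strategy you gesture at belongs to the proof of \Cref{thm:no_future_dep_implies_admissibility}, and even there it maximizes the set of inputs for which satisfaction remains possible rather than demanding feasibility under all next inputs. To repair your argument, drop the uniform greedy construction and argue per fixed $\gamma'$ at the first failing position, as above.
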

\begin{proof}
	Towards a contradiction, suppose that there are $\gamma$, $\sigma$, with $M_{\gamma,\sigma} \neq L_{\gamma,\sigma}$. By construction of the two sets, clearly $L_{\gamma,\sigma} \subseteq M_{\gamma,\sigma}$ holds. Hence, $M_{\gamma,\sigma} \not\subseteq L_{\gamma,\sigma}$ and thus there is an input sequence $\gamma' \in (2^{V \setminus O})^\omega$ such that there is a sequence $\sigma' \in (2^{O})^\omega$ with $\gamma\gamma' \cup \sigma\sigma' \models \varphi$, while there is no strategy $s$ with $\comp(s,\gamma\gamma')\models\varphi$ and $\comp(s,\gamma\gamma') = \sigma\sigma''$ for some $\sigma'' \in (2^{O})^\omega$. In particular, there is no strategy $s$ with $\comp(s,\gamma\gamma') = \sigma\sigma'$.
	Without loss of generality, let $M_{\gamma\gamma'_0,\sigma\sigma'_0} = L_{\gamma\gamma'_0,\sigma\sigma'_0}$.
	Hence, to determine the valuation of at least one variable $u \in O$ at position $i := |\gamma|$, a strategy has to predict the valuations of variables in $V \setminus O$.
	
	Let $\realInp{\gamma}{\gamma'} \in (2^\inp)^\omega$ be the input sequence such that $\realInp{\gamma}{\gamma'} = \gamma\gamma' \cap \inp$. Furthermore, let $\eta \in (2^V)^*$ be the finite sequence of length $i+1$ such that $\gamma \cup \sigma = \eta_0 \dots \eta_{i-1}$ and $\eta_i = \gamma'_i \cup \sigma'_i$. 
	Clearly, there is a sequence $\branchingPi{\realInp{\gamma}{\gamma'}}{} \in (2^\out)^\omega$ with $\branchingPi{\realInp{\gamma}{\gamma'}}{0} \dots \branchingPi{\realInp{\gamma}{\gamma'}}{i} = \eta \cap \out$ such that $\realInp{\gamma}{\gamma'} \cup \branchingPi{\realInp{\gamma}{\gamma'}}{} \models \varphi$, namely $\branchingPi{\realInp{\gamma}{\gamma'}}{} = (\gamma \gamma' \cup \sigma\sigma') \cap \out$.
	Let $\eta' \in (2^V)^*$ be the finite sequence of length $i+1$ such that $u \in \eta'_i \leftrightarrow u \not\in \eta_i$, $v \in \eta'_i \leftrightarrow v \in \eta_i$ for all $v \in V \setminus \{u\}$, and $\eta'_j = \eta_j$ for all $j<i$.
	Let $\branchingPiP{\realInp{\gamma}{\gamma'}}{} \in (2^\out)^\omega$ be the sequence with $\branchingPiP{\realInp{\gamma}{\gamma'}}{0} \dots \branchingPiP{\realInp{\gamma}{\gamma'}}{i} = \eta' \cap \out$ and $\branchingPiP{\realInp{\gamma}{\gamma'}}{j} = \branchingPi{\realInp{\gamma}{\gamma'}}{j}$ for all $j \neq i$.
	Furthermore, let $s$ be the strategy such that $\comp(s,\gamma\gamma') = \realInp{\gamma}{\gamma'} \cup \branchingPiP{\realInp{\gamma}{\gamma'}}{}$.
	Since $\branchingPiP{\realInp{\gamma}{\gamma'}}{0} \dots \branchingPiP{\realInp{\gamma}{\gamma'}}{i} = \eta' \cap \out$, we have $\branchingPiP{\realInp{\gamma}{\gamma'}}{0} \dots \branchingPiP{\realInp{\gamma}{\gamma'}}{i-1} = (\gamma \cup \sigma) \cap \out$ as well. Thus, $\realInp{\gamma}{\gamma'} \cup \branchingPiP{\realInp{\gamma}{\gamma'}}{} = \sigma\sigma''$ for some $\sigma'' \in (2^O)^\omega$, and therefore, by assumption, $\realInp{\gamma}{\gamma'} \cup \branchingPiP{\realInp{\gamma}{\gamma'}}{} \not\models\varphi$.
	
	First, assume that for all sequences $\branchingPiPP{\realInp{\gamma}{\gamma'}}{} \in (2^\out)^\omega$ with $\branchingPiPP{\realInp{\gamma}{\gamma'}}{0} \dots \branchingPiPP{\realInp{\gamma}{\gamma'}}{i} = \eta' \cap \out$, $\realInp{\gamma}{\gamma'} \cup \branchingPiPP{\realInp{\gamma}{\gamma'}}{} \not\models \varphi$ holds.
	If there is a sequence $\realInpP \in (2^\inp)^\omega$ with $\realInpP_0 \dots \realInpP_{i} = \eta \cap \inp$ such that there is a sequence $\branchingPiPP{\realInpP}{} \in (2^\out)^\omega$ with $\realInpP \cup \branchingPiPP{\realInpP}{} \models \varphi$, then $u$ depends on the input, a contradiction.
	Otherwise, there is no input sequence extending $\eta$ such that there is an output sequence extending $\eta'$ such that $\varphi$ can be satisfied. But then choosing the valuation of $u$ in $\eta'$ instead of the one in $\eta$ always leads to a violation of $\varphi$ and therefore the valuation of $u$ at position $i$ is not determined by the valuations of variables that a strategy has to predict, a contradiction.
	
	Hence, there is a sequence $\branchingPiPP{\realInp{\gamma}{\gamma'}}{} \in (2^\out)^\omega$ with $\branchingPiPP{\realInp{\gamma}{\gamma'}}{0} \dots \branchingPiPP{\realInp{\gamma}{\gamma'}}{i} = \eta' \cap \out$ and $\realInp{\gamma}{\gamma'} \cup \branchingPiPP{\realInp{\gamma}{\gamma'}}{} \models \varphi$. 
	By construction, $\branchingPiP{\realInp{\gamma}{\gamma'}}{}$ and $\branchingPiPP{\realInp{\gamma}{\gamma'}}{}$ only differ in output variables. Let $P \subseteq \out \setminus \{u\}$, $F \subseteq \out$ be the sets containing the variables whose valuations differ in $\branchingPiP{\realInp{\gamma}{\gamma'}}{}$ and $\branchingPiPP{\realInp{\gamma}{\gamma'}}{}$ at position $i$ and at a position greater than $i$, respectively. Without loss of generality, we can choose $\branchingPiPP{\realInp{\gamma}{\gamma'}}{}$ such that $(P,F)$ is a minimal satisfying changeset with respect to $\varphi$, $\realInp{\gamma}{\gamma'}$, $\branchingPiP{\realInp{\gamma}{\gamma'}}{}$, and $i$.
	
	Then $u$ depends semantically on the current valuation of the variables in $P$ and on the future valuation of the variables in $F$. Clearly, this yields a contradiction if either $v \in F$ and $v \not\in O$ or $v \in P$ for some variable $v \not\in O$ with $\rankImplementation{u} \leq \rankImplementation{v}$.
	If $P \cup F \subseteq O$, then there is a strategy $s'$ with $\comp(s',\gamma\gamma') = \branchingPiPP{\realInp{\gamma}{\gamma'}}{}$, a contradiction.
	If $F = \emptyset$ and for all $v \in P$ we have $\rankImplementation{v} < \rankImplementation{u}$, then a strategy is able to react to the valuation of $v$ directly at position $i$ and thus there is a strategy $s'$ with $\comp(s',\gamma\gamma') \models \varphi$ and $\comp(s',\gamma\gamma') = \sigma\sigma''$ for some $\sigma'' \in (2^O)^\omega$, a contradiction.
\end{proof}

Using the above Lemma, we can now construct the dominant strategy for the component under consideration and prove its dominance. The correctness of \Cref{thm:no_future_dep_implies_admissibility} then follows directly.

\begin{proof}
	We construct a dominant strategy $s$ for component $p$ with $\out(p) = O$ as follows:
	Let $\gamma \in (2^{V \setminus O})^\omega$ be an input sequence and let $\sigma_i$ be $\comp(s,\gamma)$ at position $i$.
	Based on the history of inputs $\gamma^{i-1} := \gamma_0\dots\gamma_{i-1}$ and outputs $\sigma^{i-1} := \sigma_0\dots\sigma_{i-1}$, we determine $\sigma_i$:
	For all $\eta \in 2^{V \setminus O}$, $\rho \in 2^O$, compute the set \[M_{\gamma^{i-1}\eta,\sigma^{i-1}\rho} = \{ \gamma' \in (2^{V \setminus O})^\omega \mid \exists \sigma' \in (2^{O})^\omega. ~ \gamma^{i-1}\eta\gamma' \cup \sigma^{i-1}\rho\sigma' \models \varphi \}.\]
	If we have $\sum_{\eta \in 2^{V \setminus O}}|M_{\gamma^{i-1}\eta,\sigma^{i-1}\rho}| \geq \sum_{\eta \in 2^{V \setminus O}}|M_{\gamma^{i-1},\sigma^{i-1}\rho'}|$ for $\rho,\rho' \in 2^O$, then set $\sigma_i = \rho$. Otherwise, set $\sigma_i = \rho'$.
	
	Towards a contradiction, suppose that $s$ is not dominant. Then there is a sequence $\gamma \in (2^{V \setminus O})^\omega$ and a strategy $t$ with $\comp(s,\gamma)\not\models \varphi$ and $\comp(t,\gamma)\models\varphi$.
	Let $L_{\gamma,\sigma}$ be the set of input sequences $\gamma'$ such that there is a strategy $s'$ that satisfies $\varphi$ on input $\gamma\gamma'$ and with $\comp(s',\gamma\gamma') = \sigma\sigma'$ for some $\sigma' \in (2^O)^\omega$, i.e.\,
	\[L_{\gamma,\sigma} = \{ \gamma' \in (2^{V \setminus O})^\omega \mid \exists s. ~ \comp(s,\gamma\gamma') \models \varphi \land \exists \sigma'.~ \comp(\gamma\gamma') = \sigma\sigma'\}.\]
	Since $\comp(t,\gamma)\models\varphi$, we have $|L_{\gamma^i,\sigma^i_t}| > 1$ at every position $i$, where $\sigma^i_t$ is the output sequence produced by $t$ on $\gamma$ up to position $i$.
	For $s$, we have $|L_{\gamma^j,\sigma^j_s} | = 0$ from some position $j$ on, where $\sigma^j_s$ is the output sequence produced by $s$ on $\gamma$ up to position $j$. Let $j$ be the first such position.	
	Since for all $u \in O$ and $v \in V \setminus O$, $u$ neither depends semantically on the future valuation of $v$, nor on the current valuation of a variable $v$ with $\rankImplementation{u} \leq \rankImplementation{v}$, nor on the input, we have $M_{\gamma,\sigma} = L_{\gamma,\sigma}$ for all $\gamma$, $\sigma$ with $|\gamma|=|\sigma|$ by \Cref{lem:same_input_sequences_for_outputs_and_strategies}. 
	Thus, $|M_{\gamma^i,\sigma^i_t}| > 1$ and $|M_{\gamma^j,\sigma^j_s}| = 0$. 
	By construction of strategy $s$, $|M_{\gamma^j,\sigma^j_s}| = 0$ only holds if $|M_{\gamma^j,\sigma^{j-1}_s\rho}| = 0$ holds for all $\rho \in 2^{O}$ as well. But then $|M_{\gamma^{j-1},\sigma^{j-1}_s}| = 0$ and therefore $|M_{\gamma^{0},\rho}| = 0$ holds already at the very first position for every $\rho \in 2^{O}$. In particular, we have $|M_{\gamma^{0},\sigma^0_t}| = 0$. Hence, by \Cref{lem:same_input_sequences_for_outputs_and_strategies}, we have $|L_{\gamma^{0},\sigma^0_t}| = 0$ and thus $\comp(t,\gamma)\not\models\varphi$, a contradiction.
\end{proof}


\subsection*{Proof of \Cref{lem:completeness_no_dependencies_to_inputs}}

\begin{proof}
	Since the strongly connected components of the dependency graph $\dgsem{\varphi}$ of $\varphi$ build the $k$ components $p_1, \dots, p_k$ of the system, there are no cyclic dependencies between the components.
	Thus, since there are no semantic dependencies from output variables of the system to input variables and by construction of the synthesis order, for all components $p \in \mathcal{C}^0$, we have $\rankSynthesis{u} < \rankSynthesis{v}$ for all $v \in \out \setminus \out(p)$ and for all $u \in \out(p)$. Hence, no output variable of $p$ depends on the future valuation of any variable outside of $p$. Furthermore, either there is no pair $(u,v)$ of variables $u \in \out(p)$, $v \in \out \setminus \out(p)$ such that $u$ depends on the current valuation of $v$, or we have $\rankImplementation{v} < \rankImplementation{u}$ for all these pairs.
	Thus, $\varphi$ is admissible for $p$ by \Cref{thm:no_future_dep_implies_admissibility}.
	
	Next, let $p \in \mathcal{C}^n$ be a component and let $p_1, \dots, p_i \in \mathcal{C}^0 \cup \dots \cup \mathcal{C}^{n-1}$ be the components such that $\rankSynthesis{p_j} < \rankSynthesis{p}$ for $1 \leq j \leq i$. Let $s_1, \dots, s_i$ be strategies for these components. By the incremental synthesis algorithm, the parallel composition of $s_j$ and the strategies of the components $p'$ with $\rankSynthesis{p'} < \rankSynthesis{p_j}$ is dominant for each $p_j$ with $1 \leq j \leq i$.
	By construction of the synthesis order, $\mathcal{C}^{n-1}$ is the set of direct predecessors of $p$. If $|\mathcal{C}^{n-1}|=1$, then it directly follows that $s_1 \pc \dots \pc s_i$ is dominant for $p_1 \pc \dots \pc p_i$.
	Otherwise, i.e., if $|\mathcal{C}^{n-1}| > 1$, then there are components $p'$, $p''$ with $\rankSynthesis{p'} = \rankSynthesis{p''}$ and $p \neq p''$.
	By definition of the synthesis order, then either $\varphi$ is a safety property, or $\varphi$ is a property where only output variables of one of the components affect the liveness part of $\varphi$. In the former case, the parallel composition of the strategies of all direct predecessors of $p$ is dominant by \Cref{thm:compositionality_safety}. In the latter case, it is dominant by \Cref{thm:compositionality_liveness_only_for_one_process}. Hence, by construction, $s_1 \pc \dots \pc s_i$ is dominant for $p_1 \pc \dots \pc p_i$.
	By the incremental synthesis algorithm, these strategies are already synthesized when synthesizing a strategy $s$ for $p$ and we require $s_1 \pc \dots \pc s_i \pc s$ to be dominant.
	For the sake of readability, let $\out_{1 \dots i} := \out(p_1) \cup \dots \cup \out(p_i)$ and let $\out_{1..p} := \out_{1\dots i} \cup \out(p)$.
	Since there are no semantic dependencies from output variables of the system to input variables and by definition of the synthesis order, no variable in $\out_{1\dots p}$ depends on the future valuation of any variable outside $p_1 \pc \dots \pc p_i \pc p$.
	Furthermore, there is no pair $(u,v)$ of variables $u \in \out_{1 \dots p}$, $v \in \out \setminus \out_{1 \dots p}$ such that $u$ depends on the current valuation of $v$, or we have $\rankImplementation{v} < \rankImplementation{u}$ for all these pairs.
	Thus, $\varphi$ is admissible for $p_1 \pc \dots \pc p_i \pc p$ by \Cref{thm:no_future_dep_implies_admissibility}.
	
	Let $t$ be a dominant strategy for $p_1 \pc \dots \pc p_i \pc p$. 
	Moreover, let $t_p$ and $t_{1 \dots i}$ be strategies for $p$ and $p_1 \pc \dots \pc p_i$, respectively, such that
	\begin{align*}
		\comp(t,\gamma) \cap \out(p) &= \comp(t_p,\gamma\cup\gamma^{1 \dots i}) \cap \out(p), \\
		\comp(t,\gamma) \cap \out_{1 \dots i} &= \comp(t_{p_1 \dots p_i},\gamma\cup\gamma^{p}) \cap \out_{1 \dots i}
	\end{align*}
	for every $\gamma \in (2^{V \setminus \out_{1\dots p}})^\omega\!$, $\gamma^{1\dots i} \in (2^{\out_{1\dots i}})^\omega\!$, and $\gamma^{p} \in (2^{\out(p)})^\omega\!$.
	We claim that $s_1 \pc \dots \pc s_i \pc t_p$ is dominant for $p_1 \pc \dots \pc p_i \pc p$. Towards a contradiction, suppose that it is not dominant.
	Then there is a sequence $\gamma \in (2^{V \setminus \out_{1 \dots i}})^\omega$ such that $\comp(s_1 \pc \dots \pc s_i \pc t_p,\gamma)\not\models\varphi$ while there is an alternative strategy that satisfies $\varphi$ on input $\gamma$. Since $t$ is dominant for $p_1 \pc \dots \pc p_i \pc p$, $\comp(t,\gamma)\models\varphi$ follows.
	Let 
	\begin{align*}
		\gamma^{s_p} &= \comp(s_1 \pc \dots \pc s_i \pc t_p,\gamma) \cap \out(p), \\
		\gamma^{s_{1 \dots i}} &= \comp(s_1 \pc \dots \pc s_i \pc t_p,\gamma) \cap \out_{1 \dots i}.
	\end{align*}
	By construction, we have 
	\[\comp(s_1 \pc \dots \pc s_i \pc t_p,\gamma) = \comp(s_1 \pc \dots \pc s_i,\gamma\cup\gamma^{s_p}) = \comp(t_p,\gamma\cup\gamma^{s_{1 \dots i}}).\] 
	Hence, $\comp(s_1 \pc \dots \pc s_i,\gamma\cup\gamma^{s_p})\not\models\varphi$ follows and therefore, since $s_1 \pc \dots \pc s_i$ is dominant for $p_1 \pc \dots \pc p_i$, we have $\comp(t_{1 \dots i},\gamma\cup\gamma^{s_p})\not\models\varphi$ as well.
	By construction, $\comp(s_1 \pc \dots \pc s_i \pc t_p,\gamma) = \comp(t_p,\gamma\cup\gamma^{s_{1 \dots i}})$ holds and therefore we have $\gamma^{s_p} = \comp(t_p,\gamma\cup\gamma^{s_{1 \dots i}}) \cap \out(p)$. By definition of $t_p$, $\gamma^{s_p} = \comp(t,\gamma)\cap\out(p)$ follows.
	Therefore, $\comp(t_{1 \dots i},\gamma\cup(\comp(t,\gamma)\cap\out(p)))\not\models \varphi$ and thus, since we have $\comp(t_{1 \dots i},\gamma\cup(\comp(t,\gamma)\cap\out(p))) = \comp(t,\gamma)$ by construction of $t_{1\dots i}$, $\comp(t,\gamma)\not\models\varphi$ follows, a contradiction.
\end{proof}


\subsection*{Proof of \Cref{thm:independend_component_admissible}}

\begin{proof}
	Since there are no unresolved dependencies from $p$ to other output variables by assumption, admissibility of $\varphi$ for $p$ follows with \Cref{lem:completeness_no_dependencies_to_inputs} if $p$ does not depend on the input.
	Otherwise, there is a variable $u \in \out(p)$ that depends on the input.
	Towards a contradiction, suppose that $\varphi$ is not admissible for $p$. Since there are no unresolved dependencies from $p$ to other output variables by assumption, a strategy for $p$ has to predict the future valuation of an input variable. Yet, since $\varphi$ is realizable by assumption, there is a dominant strategy for the whole system and thus the need to predict the valuation of an input variable has to be circumvented by the strategy of another component.
	
	Since $\rankSynthesis{p'} \leq \rankSynthesis{p}$ for all components $p'$, there is no component that depends on $p$. Thus, changing the valuation of a variable $v \in \out \setminus \out(p)$ at a position $i$, does not require a change in the valuation of a variable $u \in \out(p)$ at a position $j \geq i$. Hence, by definition of present and future dependencies, a change in the valuation of $u$ does not require a change in the valuation of $v$ in the past, and thus $p$ is completely independent of the other components.
	But then a different valuation of a variable outside of $p$ cannot affect the need to predict input variables and thus $\varphi$ is not realizable, a contradiction.
\end{proof}


\subsection*{Extended Semantic Component Selection Algorithm}

If we encounter a component $p$ for which the specification $\varphi$ is not admissible during incremental synthesis, unrealizability of $\varphi$ for the whole system does not follow if there is a component $p'$ with $\rankSynthesis{p} < \rankSynthesis{p'}$ (c.f.\ example in \Cref{sec:sem_dep}).
In this case, the extended semantic component selection algorithm combines component for which $\varphi$ is not admissible with a direct successor in the synthesis order until either $\varphi$ is admissible or only a single component is left:

 Let $p$ be the smallest component in the synthesis order such that there is no strategy $s$ for $p$ such that $t \pc s$ is dominant for $\varphi$, where $t$ is the dominant strategy for the components with a smaller rank in the synthesis order. Let $p'$ be a direct successor in the synthesis order. Merge $p$ and $p'$ into a single component, i.e., try to synthesize a strategy $s'$ for $p \pc p'$ such that $t \pc s'$ is dominant for $\varphi$. If there is still no such strategy, merge another direct successor of $p$, or, if there is none, a direct successor of $p'$. Repeat until only a single component is left.


\section{Syntactic Dependencies}\label{app:syntactic_dependencies}

\subsection*{Proof of \Cref{lem:sem_implies_syn_single_conjunct}}

\begin{proof}
	Let $u$ depend semantically on the current or future valuation of $v$. 
	Let $\mathcal{T}(\varphi)$ be the syntax tree of $\varphi$ and let $q$ be its root node.
	By construction of $\depSet{q}$, conjunction is the only binary operator that may prevent two variables from being contained in the same set $M \in \depSet{q}$. Hence, since $\varphi$ does not contain conjunctions by assumption, $u$ and $v$ can only not be contained in the same set $M \in \depSet{q}$ if $\varphi$ is of the form $\varphi = (\Globally\Eventually \psi) \lor \psi'$, where both $u$ and $v$ only occur in $\psi$. However, solely changing the valuation of $u$ at a single position cannot cause a violation of $\Globally\Eventually \psi$. Hence, since $u$ does not occur in $\psi'$, solely changing the valuation of $u$ cannot cause a violation of $\varphi$. Thus, there is a set $M \in \depSet{q}$ such that $(u,x,y), (v,x',y') \in M$ for some $x,x' \in \mathbb{N}$, $y,y' \in \mathbb{B}$.
	Proof by structural induction on $\varphi$:
	\begin{itemize}
		\item If both $u$ and $v$ do not occur under any unbounded temporal operator, then a change in the valuation of $u$ at a single position $i$ may only cause a violation of $\varphi$ if $u$ occurs under $i$ $\Next$-operators. Analogously, a change in the valuation of $v$ at position $j \geq i$ may only cause the satisfaction of $\varphi$ again if $v$ occurs under $j$ $\Next$-operators. Hence, there is a set $M \in \depSet{q}$ with $(u,x,\false), (v,x',\false) \in M$ and $i = x \leq x' = j$. If $(u,v) \in \semanticPresentEdges{\varphi}\!$, then $i=j$ and thus $x = x'$. Hence, there is a syntactic present dependency from $u$ to $v$. If $(u,v) \in \semanticFutureEdges{\varphi}\!$, then $i<j$ and thus $x < x'$. Thus, there is a syntactic future dependency from $u$ to $v$.
		\item If $u$ occurs under an unbounded temporal operator while $v$ does not, then there is a set $M \in \depSet{q}$ such that $(u,x,\true),(v,x',\false) \in M$ for some $x,x' \in \mathbb{N}$. Furthermore, a change in the valuation of $v$ at position $j$ may only cause the satisfaction of $\varphi$ again if $v$ occurs under $j$ $\Next$-operators. Since changing the valuation of $u$ at position $i$ causes a violation of $\varphi$, $u$ has to occur under no more than $i$ $\Next$-operators. Hence, we have $ x \leq i \leq x' = j$. If $(u,v) \in \semanticPresentEdges{\varphi}\!$, then $i=j$ and thus $x \leq x'$. Thus, there is a syntactic present dependency from $u$ to $v$. If $(u,v) \in \semanticFutureEdges{\varphi}\!$, then $i<j$ and thus $x < x'$. Hence, there is a syntactic future dependency from $u$ to $v$.
		\item If $v$ occurs under an unbounded temporal operator while $u$ does not, then there is a set $M \in \depSet{q}$ such that $(u,x,\false),(v,x',\true) \in M$ for some $x,x' \in \mathbb{N}$ and, analogously to the above case, $j \leq x' \leq x = i$. Hence, there is a syntactic future dependency from $u$ to $v$. If $(u,v) \in \semanticPresentEdges{\varphi}\!$, then $i=j$ and thus $x' \leq x$. Thus, there is a syntactic present dependency from $u$ to $v$.
		\item If both $u$ and $v$ occur under different unbounded temporal operators, or if both occur under the same $\Eventually$-operator, or if both occur on the right side of the same $\Until$-operator, then $(u,x,\true),(v,x',\true) \in M$ for some set $M \in D_q$ and some $x,x' \in \mathbb{N}$. Hence, there is a syntactic present dependency as well as a syntactic future dependency from $u$ to $v$.
		\item If $u$ and $v$ occur on different sides of the same $\Until$-operator, then let $\psi \Until \psi'$ be a subformula of $\varphi$, where either $u$ occurs in $\psi$ and $v$ occurs in $\psi'$ or vice versa. In the former case, there is a set $M \in \depSet{q}$ with $(u,x,\true), (v,x',y') \in M$ for some $x,x' \in \mathbb{N}$ and $y' \in \mathbb{B}$. If there is a $(v,x',y') \in M$ with $y' = \true$, then there is a syntactic present dependency as well as a syntactic future dependency from $u$ to $v$. If $y' = \false$ for all $(v,x',y') \in M$, then $\psi'$ does not contain any unbounded temporal operator by construction of $\depSet{q}$. Hence, the existence of a syntactic present or future dependency follows analogously to the second case. In the latter case, i.e., if $v$ occurs in $\psi$ and $u$ occurs in $\psi'$, there is a set $M \in \depSet{q}$ with $(u,x,y), (v,x',\true) \in M$ for some $x,x' \in \mathbb{N}$ and $y \in \mathbb{B}$. If there is a $(u,x,y) \in M$ with $y = \true$, there is a syntactic present dependency as well as a syntactic future dependency. If $y = \false$ for all $(u,x,y) \in M$, then $\psi'$ does not contain any unbounded temporal operator by construction of $\depSet{q}$. Hence, the existence of a syntactic present or future dependency follows analogously to the third case.		
		\item If both $u$ and $v$ occur under the same $\Globally$-operator or if both occur on the left side of an $\Until$-operator, then let $\psi$ be an LTL formula such that $\Globally \psi$ or $\psi \Until \psi'$, respectively, is a subformula of $\varphi$, where both $u$ and $v$ occur in $\psi$. If $u$ or $v$ occurs in any other subformula of $\varphi$, then there is a syntactic present dependency as well as a syntactic future dependency by the fourth case. Otherwise, changing the valuation of $u$ at position $i$ may only yield a violation of $\varphi$ if it causes a violation of $\psi$ at a position $k \leq i$. Analogously, changing the valuation of $v$ at a position $j \geq i$ may only yield a satisfaction of $\varphi$ again if it causes a satisfaction of $\psi$ at position $k$. Hence, there is only a semantic present or future dependency from $u$ to $v$ in $\varphi$ if there is one in $\psi$. Thus, there are respective syntactic present or future dependencies by induction hypothesis.
	\end{itemize}
\end{proof}


\subsection*{Proof of \Cref{thm:sem_implies_syn}}

\begin{proof}
	If $u$ depends semantically on the current or future valuation of an output variable $v$, there are sequences $\eta, \eta' \in (2^V)^\omega$ of length $i+1$ as well as an input sequence $\gamma \in (2^\inp)^\omega$ and sequences $\branchingPi{\gamma}{}, \branchingPiP{\gamma}{} \in (2^\out)^\omega$ as in \Cref{def:semantic_dependencies}. 
	Since $\gamma \cup \branchingPi{\gamma}{} \models \varphi$ while $\gamma \cup \branchingPiP{\gamma}{} \not\models \varphi$, there is a conjunct of $\varphi$ that is violated by $\branchingPiP{\gamma}{}$ while it is satisfied by $\branchingPi{\gamma}{}$. 
	Furthermore, there are sets $P \in \out \setminus \{u\}$, $F \in \out$ such that $(P,F)$ is a minimal satisfying changeset for $\varphi$, $\gamma$, $\branchingPiP{\gamma}{}$, and $i$. Hence, there is a sequence $\branchingPiPP{\gamma}{} \in (2^\out)^\omega$ such that $\gamma \cup \branchingPiPP{\gamma}{} \models \varphi$.
	Thus, the violation of the conjunct is fixable by changing the valuations of the variables in $P$ at position $i$ and of the variables in $F$ at positions $j>i$.
	Not all of these changes are necessarily needed for the satisfaction of the conjunct: Satisfying it may introduce violations of different conjuncts yielding a violation of $\varphi$. 
	
	Therefore, we introduce the notion of \emph{violation clusters}.
	A violation cluster is a set of conjuncts of $\varphi$, where all conjuncts are violated by the same change in the valuation of a variable.
	In particular, the cluster $C_1$ contains all conjuncts $\varphi^1_1, \dots \varphi^1_{m_1}$ that are satisfied by $\pi$ but violated by $\pi'$. To satisfy these conjuncts again, further changes in variables are needed that may introduce violations of different conjuncts. The cluster $C_{1 \cdot i}$ contains the conjuncts $\varphi^{1 \cdot i}_1, \dots, \varphi^{1 \cdot i}_{m_{1\cdot i}}$ that are violated by the changes needed to satisfy $\varphi^1_i$ and so on. This induces a tree-like structure of violation clusters. Note that a conjunct of $\varphi$ may occur in different violation clusters.
	
	With every cluster $C_k = \{\varphi^k_1, \dots, \varphi^k_{m_k}\}$, we associate a sequence $\clusterTrace{\gamma}{k} \in (2^\out)^\omega$ with $\clusterTrace{\gamma}{k}_0 \dots \clusterTrace{\gamma}{k}_i = \eta \cap \out$ such that $\gamma \cup \clusterTrace{\gamma}{k} \models \varphi^k_1 \land \dots \land \varphi^k_{m_k}$. 
	Furthermore, we associate an output variable $u_k$ and a position $i_k$ as well as a sequence $\clusterTraceP{\gamma}{k} \in (2^\out)^\omega$ such that solely changing the valuation of $u_k$ at position $i_k$ in $\clusterTrace{\gamma}{k}$ yields $\clusterTraceP{\gamma}{k}$, and such that $\gamma \cup \clusterTraceP{\gamma}{k} \not \models  \varphi_{k_1} \lor \dots \lor \varphi_{k_m}$.
	With every conjunct $\varphi^k_j$ of the cluster $C_k$, we associate sets $P^k_j \subseteq P$, $F^k_j \subseteq F$ such that $(P^k_j, F^k_j)$ is a minimal satisfying changeset w.r.t.\ $\varphi^k_j$, $\gamma$, $\clusterTraceP{\gamma}{}$, and $i_k$.
	Note that $|P^k_j|+|F^k_j| = 1$ since $\varphi^k_j$ does not contain conjunctions by assumption. Let $v^k_j$ be the only variable contained in $P^k_j \cup F^k_j$. Let $\conjunctTracePP{\gamma}{k}{j}$ be the trace that satisfies $\varphi^k_j$ and only differs from $\clusterTraceP{\gamma}{k}$ in the valuation of $v^k_j$ at position $i_k$ if $v^k_j \in P^k_j$, and at a position greater than $i_k$ if $v^k_j \in F^k_j$. 
	Note that the change of the valuations of the variables in $F^k_j$ has to take place in the same positions as in the change from $\branchingPiP{\gamma}{}$ to $\branchingPiPP{\gamma}{}$ to guarantee consistency of $\conjunctTracePP{\gamma}{k}{j}$ and $\branchingPiPP{\gamma}{}$.
	
	For cluster $C_1$, we have $\clusterTrace{\gamma}{1} = \branchingPi{\gamma}{}$, $\clusterTraceP{\gamma}{1} = \branchingPiP{\gamma}{}$, $u_1 = u$, and $i_1 = 1$. For a cluster $C_{k \cdot j}$, we choose $\clusterTrace{\gamma}{k \cdot j} = \clusterTraceP{\gamma}{k}$ and $\clusterTraceP{\gamma}{k \cdot j} = \conjunctTracePP{\gamma}{k}{j}$ as well as $u_{k \cdot j} = v^k_j$ and $i_{k \cdot j} = \ell^k_j$, where $\ell^k_j$ is the position at which the valuation of $v^k_j$ differs in $\clusterTraceP{\gamma}{k}$ and $\conjunctTracePP{\gamma}{k}{j}$.
	We show by induction on the depth of the tree of violation clusters that for every $v^k_j$ with $P^k_j \cup F^k_j = \{v^k_j\}$ for some $k$,$j$, if $v^k_j \in P$, then $(u,v^k_j) \in \syntacticPresentEdges{\varphi}$, and if $v^k_j \in F$, then $(u,v^k_j) \in \syntacticFutureEdges{\varphi}$:

	\noindent
	\emph{Base Case}: If there is a semantic present or future dependency from $u_1$ to $v^1_j$, then $(u_1, v^1_j) \in \syntacticPresentEdges{\varphi}$ or $(u_1,v^1_j) \in \syntacticFutureEdges{\varphi}$, respectively, by \Cref{lem:sem_implies_syn_single_conjunct} since we consider only a single conjunct $\varphi^1_j$ and it does not contain conjunction by assumption. Since $u_1 = u$ by construction, the claim follows directly.
	
	\noindent
	\emph{Induction Step}: By construction, $u_k = v^{k'}_{j'}$, where $k = k' \cdot j'$, i.e, $u_k$ is the variable that needs to be changed in order to satisfy a predecessor conjunct of $\varphi^k_j$ in the cluster tree. By induction hypothesis, we thus have $(u,u_k) \in \syntacticPresentEdges{\varphi}$ if $u_k \in P$ and $(u,u_k) \in \syntacticFutureEdges{\varphi}$ if $u_k \in F$.
	
	First, if $i_k = i$, then $u_k \in P$ by construction and thus $(u,u_k) \in \syntacticPresentEdges{\varphi}$ follows. If $v^k_j \in P$, then $v^k_j \in P^k_j$ as well since $i_k=i$. Thus, $(u_k,v^k_j) \in \syntacticPresentEdges{\varphi}$ follows with \Cref{lem:sem_implies_syn_single_conjunct}. Since we build the transitive closure over output variables of the syntactic dependency graph and since $u, u_k, v^k_j \in \out$ by construction, we have $(u,v^k_j) \in \syntacticPresentEdges{\varphi}$. If $v^k_j \in F$, then a change in the valuation of $v^k_j$ at a position greater than $i$ is needed and thus, since $i_k = i$ by assumption, $v^k_j \in F^k_j$ as well. Thus, $(u_k,v^k_j) \in \syntacticFutureEdges{\varphi}$ follows with Lemma~\ref{lem:sem_implies_syn_single_conjunct}. Since we build the transitive closure of the syntactic dependency graph over output variables and since $u, u_k, v^k_j \in \out$ by construction, $(u,v^k_j) \in \syntacticFutureEdges{\varphi}$ follows.
	
	Second, if $i_k > i$, then $u_k \in F$ and thus $(u,u_k) \in \syntacticFutureEdges{\varphi}$ follows. We partition $P^k_j \cup F^k_j$ into three sets $V^j_{<i_k}$, $V^j_{=i_k}$, $V^j_{>i_k}$ containing the variables whose valuations have to be changed at a position less than $i_k$, at position $i_k$, and at a position greater than $i_k$, respectively. Since $P^k_j \cup F^k_j = \{v^k_j\}$ by construction, we have $V^j_{<i_k} \cup V^j_{=i_k} \cup V^j_{>i_k} = \{v^k_j\}$ as well. 
	If $v^k_j \in V^j_{=i_k}$ or $v^k_j \in V^j_{>i_k}$, then clearly $v^k_j \in F$. Furthermore, in the former case, we obtain $(u_k,v^k_j) \in \syntacticPresentEdges{\varphi}$ and in the latter case, we obtain $(u_k,v^k_j) \in \syntacticFutureEdges{\varphi}$. Since we build the transitive closure of the syntactic dependency graph over output variables and since $u,u_k,v^k_j \in \out$ by construction, we have $(u,v^k_j) \in \syntacticFutureEdges{\varphi}$ in both cases. 
		 If $v^k_j \in V^j_{<i_k}$, then we obtain $(v^k_j,u_k) \in \syntacticFutureEdges{\varphi}$. Thus, since we have $(u,u_k),(v^k_j,u_k) \in \syntacticFutureEdges{\varphi}$, we derive further syntactic dependencies. If the offset is the same natural number for both dependencies, then it has to be induced by the same amount of $\Next$-operators, yielding only the possibility of a semantic present dependency. Since we derive $(u,v^k_j),(v^k_j,u) \in \syntacticPresentEdges{\varphi}$ in this case and build the transitive closure, we obtain $(u,v^k_j) \in \syntacticPresentEdges{\varphi}$. Otherwise, the offset of $(u,u_k)$ has to be greater than the one for $(v^k_j,u_k)$ or at least one of them has to be $\infty$. In the latter case, we derive both present and future dependencies between $u$ and $v^k_j$. In the former case, only a semantic future dependency from $u$ to $v^k_j$ is possible due to different amounts of $\Next$-operators. Since we derive $(u,v^k_j) \in \syntacticFutureEdges{\varphi}$, the claim follows. This concludes the induction step.
		 
	 By construction, there is a conjunct $\varphi^k_j$ in a cluster $k$ such that $v \in P^k_j \cup F^k_j$. Hence, $v = v^k_j$ for some $\varphi^k_j$. Thus, if $u$ depends semantically on the current valuation of $v$, we obtain $(u,v) \in \syntacticPresentEdges{\varphi}$, and if $u$ depends semantically on the future valuation of $v$, we obtain $(u,v) \in \syntacticFutureEdges{\varphi}$ if $v \in F$.

	Next, assume that $u$ depends semantically on the input. Then there are sequences $\eta, \eta' \in (2^V)^\omega$ of length $i+1$ such that $u \in \eta'_i \leftrightarrow u \not\in \eta_i$, $v \in \eta'_i \leftrightarrow v \in \eta_i$ for all $v \in V \setminus \{u\}$, and $\eta'_j = \eta_j$ for all $j<i$. Furthermore, there is an input sequence $\gamma \in (2^\inp)^\omega$ extending $\eta$ such that there is a sequence $\branchingPi{\gamma}{}$ extending $\eta$ such that $\gamma \cup \branchingPi{\gamma}{} \models \varphi$ while we have $\gamma \cup \branchingPiPP{\gamma}{} \not\models \varphi$ for all $\branchingPiPP{\gamma}{} \in (2^\out)^\omega$ extending~$\eta'$.
	Yet, there is an input sequence $\gamma' \in (2^\inp)^\omega$ extending $\eta$ such that there is a sequence $\branchingPiPP{\gamma'}{} \in (2^\out)^\omega$ extending $\eta'$ such that $\gamma' \cup \branchingPiPP{\gamma'}{} \models \varphi$.
	Let $P_\inp, F _\inp \subseteq V$ be the sets of input variables such that $\gamma$ and $\gamma'$ differ in the variables in $P_\inp$ at position $i$ and in the variables in $F_\inp$ in positions greater than $i$.
	There is a conjunct $\varphi_j$ in $\varphi$ that is satisfied by $\gamma' \cup \branchingPiPP{\gamma'}{}$ while it is violated by $\gamma \cup \branchingPiPP{\gamma}{}$ for every $\branchingPiPP{\gamma}{} \in (2^\out)^\omega$ extending $\eta'$.
	This conjunct contains an output variable $w \in \out$ as well as an input variable $w' \in P_\inp \cup F_\inp$.
	Thus, since $\varphi_j$ does not contain conjunction by assumption, we obtain $(w,w') \in \syntacticPresentEdges{\varphi}$ if $w' \in P_\inp$ and $(w,w') \in \syntacticFutureEdges{\varphi}$ if $w' \in F_\inp$, \Cref{lem:sem_implies_syn_single_conjunct}. This concludes the proof.
\end{proof}


\section{Specification Simplification}\label{app:specification_simplification}

\subsection*{Proof of \Cref{thm:no_dep_dropping}}

\begin{proof}
	First, assume that $\psi'$ is admissible. Since $\psi$ only refers to variables outside the control of $p$, its truth value is solely determined by the input. Therefore, a dominant strategy for $\psi'$ is dominant for $\varphi$ as well.
	
	Second, assume that $\psi'$ is not admissible. Since there is no unresolved dependency from an output variable of $p$ to any variable outside of~$p$, a strategy for~$p$ thus has to predict the valuation of an input variable by \Cref{thm:no_future_dep_implies_admissibility}.
	Since $\varphi$ is realizable for the whole system by assumption, either a different valuation of an output variable outside of $p$, or the restriction of the input, prevents the need of predicting an input variable. In the first case, since there is no unresolved dependency from an output variable of $p$ to any variable outside of $p$, $\varphi$ is not admissible for $p$ either. In the latter case, since $\varphi$ is realizable for the whole system, only assumption conjuncts can restrict the behavior of input variables, contradicting the construction of $\psi$.
\end{proof}

\subsection*{Proof of \Cref{thm:winning_strat_dropping}}

\begin{proof}
	Since $s'$ is winning for $\varphi$ by assumption, $s' \pc s$ is winning for $\varphi$ as well for every strategy $s$ for $p$ by the definition of winning.
	Therefore, we have $\comp(s' \pc s,\gamma)\models\varphi\land\psi$ if, and only if, $\comp(s' \pc s,\gamma)\models\psi$ for all sequences $\gamma \in (2^{V \setminus (\out(p') \cup \out(p))})^\omega\!$, where $p'$ is the parallel composition of the components $p_i$ with $\rankSynthesis{p_i} < \rankSynthesis{p}$ and $\out(p')$ is the union of their output variables.
	Thus, there is a strategy $s$ for $p$ such that $s' \pc s$ is dominant for $\psi$, if, and only if, $s' \pc s$ is dominant for $\varphi \land \psi$.
\end{proof}


\section{Benchmark Specifications}\label{app:experiments}

\paragraph*{Sensors.}
The system consists of $n$ sensors as well as a managing unit controlling them. The managing unit may receive the direction to check the data of all sensors, denoted by the input variable $\mathit{check}$. It may request data of the $i$-th sensor using the output variable $\mathit{request_i}$. The $i$-th sensor may send data using the output variable $\mathit{data_i}$. 
Hence, the system consists of a single input variable, namely $\mathit{check}$, and $2n$ output variables, where the $n$ variables $\mathit{request_i}$ are controlled by the managing unit, and the $n$ variables $\mathit{data_i}$ are controlled by the corresponding sensors.
We used the following LTL specification $\varphi$ for a system with $n$ sensors and their managing unit:
\begin{align}
	\varphi = &\bigwedge_{1 \leq i \leq n} \bigwedge_{\substack{1 \leq j \leq n \\ i \neq j}} \Globally (\mathit{request_i} \rightarrow \neg \mathit{request_j})\\
	&\land~ \bigwedge_{1 \leq i \leq n} \Globally (\mathit{check} \rightarrow \Eventually \mathit{request_i})\\
	&\land~ \bigwedge_{1 \leq i \leq n} \Globally (\mathit{request_i} \rightarrow \Eventually \mathit{data_i})\\
	&\land~ \bigwedge_{1 \leq i \leq n} \Globally ((\Next \mathit{data_i}) \rightarrow \mathit{request_i}),
\end{align}

Line (1) ensures mutual exclusion between the requests sent by the managing unit.
If the managing unit is asked to check the sensor data, denoted by the input $\mathit{check}$, it has to send a request to every sensor eventually (cf.\ line (2)).
Every request of the managing unit then has to be answered by the corresponding sensor by sending its data eventually (c.f.\ line (3)).
Line (4) specifies that data can only be send one step after a sensor received a request.
Together with line (1), this ensures mutual exclusion between the data of different sensors as well.
Hence, the specification is suitable for architectures with only two wires, one for the communication of the managing unit with the sensors and one for the data.

\paragraph*{Robot Fleet.}
The system consists of $n$ robots in a robot fleet as well as one further robot crossing their way. Upon receiving the starting signal, denoted by the input variable $\mathit{ready}$, the additional robot starts moving. The $i$-th robot in the fleet may stop, move left, or move right, denoted by the output variables $\mathit{stop_i}$, $\mathit{left_i}$, or $\mathit{right_i}$, respectively. The additional robot outside the fleet may notify the $i$-th robot of the fleet, denoted by the output variable $\mathit{robot\_ahead_i}$, that a collision  is ahead if the fleet robot does not change its course.
We used the LTL specification $\varphi = (\Globally\Eventually\neg \mathit{ready}) \rightarrow \psi$ for the robot fleet benchmark with $n$ robots in the fleet and one additional robot, where
\begin{align}
	\psi &= \bigwedge_{1 \leq i \leq n} \neg \mathit{stop_i} ~\land~ \bigwedge_{1 \leq i \leq n} \Globally \Eventually \neg \mathit{stop_i}\\
	&\land~ \bigwedge_{1 \leq i \leq n} \Globally \neg (\mathit{left_i} \land \mathit{right_i})\\
	&\land~ \bigwedge_{1 \leq i \leq n} \Globally (\mathit{ready} \rightarrow \Eventually \mathit{robot\_ahead_i})\\
	&\land~ \bigwedge_{1 \leq i \leq n} \Globally (\mathit{robot\_ahead_i} \rightarrow \Next (\mathit{left_i} \lor \mathit{right_i} \lor \mathit{stop_i})).
\end{align}

Line (5) ensures that the fleet robots start moving in the very first step and that move infinitely often.
Mutual exclusion between moving left and right is established by Line (6).
Upon receiving the start signal, a collision between the additional robot and each fleet robot is ahead eventually (c.f.\ line (7)). This models that the additional robot starts moving and crosses the way of each fleet robot.
Line (8) ensures that the fleet robots react by either moving left, moving right, or stopping if a collision with the additional robot is ahead.

\end{document}